\pgfplotsset{compat=1.7}
\newtheorem{definition}{definition}
\newtheorem{theorem}{Theorem}
\newtheorem{lemma}[theorem]{Lemma}
\newtheorem{prop}{Proposition}
\def\name{\textit{Nebula}\xspace}
\newcommand{\cmark}{\ding{51}}%
\newcommand{\xmark}{\ding{55}}%
\newcommand{\threshold}{\ensuremath{\tau}\xspace}
\newcommand{\PDHE}{private distributed histogram estimation}
\newcommand{\eg}{{e.g.,}\xspace}
\newcommand{\ie}{{i.e.,}\xspace}
\newcommand{\myparagraph}[1]{\vspace{1ex}\noindent{\bf #1}}
\def\name{\textit{Nebula}\xspace}
\begin{document}

\date{}
\title{\name: Efficient, Private and Accurate Histogram Estimation}

\author{Ali Shahin Shamsabadi$^\dag$,
    Peter Snyder$^\dag$,
    Ralph Giles$^\dag$,
    Aurélien Bellet$^\ddag$, and Hamed Haddadi$^{\dag,\diamond}$\\
    $^\dag$ Brave Software $^\ddag$ Inria, Université de Montpellier $^\diamond$ Imperial College London}

\maketitle

\begin{abstract}

\looseness=-1 We present \textit{Nebula}\footnote{Accepted at the 2025 ACM SIGSAC Conference on Computer and Communications Security (CCS’25).}, a system for differentially private histogram estimation on data distributed among clients. \textit{Nebula} allows clients to independently decide whether to participate in the system, and locally encode their data so that an untrusted server only learns data values whose multiplicity exceeds a predefined aggregation threshold, with $(\varepsilon,\delta)$ differential privacy guarantees.
Compared to existing systems, \textit{Nebula} uniquely achieves:
\textit{i)} a strict upper bound on client privacy leakage; \textit{ii)} significantly higher utility
than standard local differential privacy systems; and \textit{iii)} no requirement for trusted third-parties, multi-party computation, or trusted hardware. We provide a formal evaluation of \textit{Nebula}'s privacy, utility and efficiency guarantees, along with an empirical assessment on three real-world datasets. On the United States Census dataset, clients can submit their data in just 0.0036 seconds and 0.0016 MB (\textbf{efficient}), under strong $(\varepsilon=1,\delta=10^{-8})$ differential privacy guarantees (\textbf{private}), enabling \textit{Nebula}'s untrusted aggregation server to estimate histograms with over 88\% better utility than existing local differential privacy deployments (\textbf{accurate}). Additionally, we describe a variant that allows
clients to submit multi-dimensional data, with similar privacy, utility, and
performance. 
Finally, we provide an implementation of \textit{Nebula}.

\end{abstract}

\section{Introduction}
\label{sec:intro}

Aggregated user data allows software developers and service providers to develop, deploy, and improve their systems in various use-cases such as browser telemetry~\cite{corrigan2020privacy}, financial crime~\cite{bogdanov2016privacy}, and digital health~\cite{andrew2023anonymization}. 
However, large-scale collection of user data introduces privacy risks, as client data may contain privacy-sensitive information (\eg client preferences/interests, transactions, and medical diagnoses \cite{bharadwaj2023federated,chadha2023differentially,boneh2021lightweight,CCS:DSQGLH22,cormode2022sample,bell2022distributed,corrigan2017prio,CCS:ErlPihKor14}).

In this work, we focus on the problem of \emph{private distributed histogram estimation}, where a central server  aims to estimate the frequencies of different data values distributed among a set of clients, while providing privacy guarantees to the clients.

Several approaches to \PDHE{} have been proposed, either in published research or in deployed
systems. One such technique is \emph{threshold-aggregation},
where the server is able to learn client values if and only if sufficiently many clients contribute
the exact same value~\cite{CCS:DSQGLH22,boneh2021lightweight,li2024popstar}. Threshold-aggregation
has the benefit of providing a simple and intuitive privacy model, but generally
lacks robust, provable privacy guarantees, due to using the \emph{deterministic} notion of $K$-anonymity
for protecting the privacy of clients~\cite{narayanan2006break,holohan2017k}.

A second type of approach to \PDHE{} relies on \emph{differential privacy}
(DP)~\cite{dwork2006differential,dwork2014algorithmic} to provide formal privacy guarantees
through statistical indistinguishability. A wide range of DP-based systems for privacy-preserving data collection
have been proposed, all of which require implementers and deployers to make unappealing tradeoffs, even for state-of-the-art systems. Local DP systems~\cite{bassily2015local,wang2017locally,acharya2019hadamard,wang2019collecting} 
provide strong privacy guarantees but generally poor utility, while central DP systems~\cite{xu2013differentially,dwork2006differential,dwork2014algorithmic} provide high utility but require prohibitive levels of trust by clients.

A third category of \PDHE{} systems attempt to achieve both high utility and privacy, but do so by introducing additional costs, relying for instance on computationally expensive, novel cryptography~\cite{balle2019privacy,cheu2019distributed,erlingsson2019amplification,prochlo,cormode2022sample,bell2022distributed},  
(\eg multi-party computation, homomorphic encryption), and/or requiring multiple rounds of heavy communication between participants~\cite{bell2022distributed},
among other concerns. These systems entail computational, bandwidth and financial costs that make adoption difficult-to-impossible for all
but resource-rich organizations.

In this paper, we describe a novel system for the problem of \PDHE{}\footnote{Our system can answer related problems such as heavy
hitters and quantiles~\cite{cormode2022sample}.} that avoids the limitations and trade-offs of existing approaches, achieving simultaneously provable differential privacy guarantees, high utility, and practical efficiency. Our system, called \name, is a novel combination of the recent sample-and-threshold mechanism~\cite{cormode2022sample} with
verifiable client-side thresholding~\cite{CCS:DSQGLH22} to provide DP guarantees without the use of trusted third parties.
\name relies on two \emph{untrusted} (non-colluding) servers: a \emph{randomness server} and an \emph{aggregation server}, and in contrast to existing state-of-the-art
systems (\eg{} \cite{prochlo}), \emph{no communication between the two servers} is required.

At a high level, each client participating in \name{} begins by randomly deciding whether to contribute any data. Clients that do decide to participate locally encode their value
using a secret-sharing scheme, which prevents the server from observing uncommon values. This secret sharing process is
very cheap, requiring only a single round of oblivious communication with the untrusted randomness server, which executes a verifiable oblivious
pseudorandom function over the client's value. Participating clients then contribute their secret share to the aggregation server over an oblivious communication channel. The aggregation server then combines all received shares to recover values which have been contributed by a sufficient number of participants.

\name enforces formal differential privacy protection for all clients through three steps: \textit{i)} the uncertainty of any particular client contributing \emph{any} value; \textit{ii)} blinding the aggregation server to uncommon values through the secret-sharing mechanism (\ie thresholding); and \textit{iii)} having some clients contribute precisely defined amounts of ``dummy data'' to obscure the distribution of uncommon (\ie unrevealed) values.

In summary, we make the following contributions to
the problem of \PDHE{}:
\begin{enumerate}[leftmargin=*,topsep=-5pt,itemsep=0pt]
    \item the \textbf{design of a novel system} for conducting privacy preserving
        data aggregation under DP guarantees that achieves all of the following:
        \textit{i)} high utility, particularly when compared to other DP-based systems
        with comparable privacy guarantees; \textit{ii)} realistic trust assumptions;
        and \textit{iii)} practical efficiency in terms of computational, bandwidth, and
        financial costs.
    \item a \textbf{formal analysis} of the system's privacy, security, utility, and efficiency
        guarantees.
    \item \looseness=-1 \textbf{empirical measurements} of the system's utility and
        efficiency over several real-world datasets. We provide an implementation demo of \name as supplementary material.
\end{enumerate}

\section{Problem \& Threat Model}
\label{sec:problem_formulation}

We consider a scenario where an untrusted service provider (the aggregation server) wants to obtain a histogram over $N$ clients data, $D=\{x_i\}_{i=1}^N$, where data point $x_i$ is held by the $i$-th client. Collecting data generated by clients and publishing the histogram might introduce privacy risks such as misusing information for profit or mass surveillance purposes~\cite{corrigan2017prio} as clients' data contain privacy-sensitive information~\cite{bharadwaj2023federated,chadha2023differentially,boneh2021lightweight,CCS:DSQGLH22,cormode2022sample,bell2022distributed,corrigan2017prio,CCS:ErlPihKor14}. Therefore, the data collection procedure and published histogram must protect the privacy of clients. We aim to design a system that enables the aggregation server to construct an \textbf{accurate} and \textbf{differentially private} histogram over clients data \textbf{without trusting servers} and \textbf{without imposing high computational and communication costs}. To achieve this, we rely on an additional untrusted party: the randomness server. We assume that the aggregation and the randomness server are non-colluding, which is a common assumption~\cite{bell2022distributed,bohler2020secure,CCS:DSQGLH22} as collusion can be made infeasible or too costly via physical means~\cite{lepinksi2005collusion,alwen2008collusion} or strict legal regulations~\cite{kamara2011outsourcing}. Following the literature~\cite{bell2022distributed}, we consider honest-but-curious clients who submit their data through an anonymizing proxy.

\looseness=-1 To protect the clients' privacy, we use Differential Privacy (DP)~\cite{dwork2006differential,dwork2014algorithmic}. In particular, we design a randomized protocol $\mathcal{A}$ that outputs a histogram 
over clients data which is close to the true histogram 
of $D$ while satisfying {$(\varepsilon,\delta)$-DP}: 
$
    \text{Pr}[\mathcal{A}(D)\in S] \leq e^{\varepsilon}\text{Pr}[\mathcal{A}(D')\in S]+\delta,
$
for any subset of possible output histograms $S \in \text{Range}(\mathcal{A})$ and for any two neighboring datasets $D$ and $D'$ where $D'$ is obtained by removing one client's data from $D$. The privacy budget $\varepsilon$ upper bounds the privacy leakage in the worst possible case. The smaller the $\varepsilon$, the stronger the privacy guarantees. Setting $\delta>0$ allow to relax the privacy requirement for unlikely events. Specifically, our protocol guarantees that the final output (\ie published histogram) satisfies DP, while the aggregation server's view satisfies computational DP~\cite{mironov2009computational}, a restriction of DP to computationally bounded adversaries commonly considered in privacy-preserving secure protocols. Among other cryptographic schemes, we use a $\threshold$-out-of-$N$ secret sharing scheme~\cite{bellare2020reimagining}, $\Pi_{\threshold,N}$, built out of two standard functionalities: 1) producing a random $\threshold$-out-of-$N$ share of a private value through a probabilistic algorithm with explicit randomness received as input; 2) recovering the private value after receiving at least its $\threshold$ valid secret shares.
The randomness server generates randomness required for $\Pi_{\threshold,N}$, without seeing any plaintext clients' data and in a verifiable manner (\ie clients can verify whether the randomness server has correctly followed the protocol in zero knowledge).

To ensure \emph{efficiency} by minimizing financial costs, computational overhead, and bandwidth consumption, our system incorporates the following design principles:  1) it avoids any communication between the randomness server and the aggregation server; 2) it precludes communication between clients; and 3) it requires minimal efforts from clients, with a single single round of interaction with each server. Avoiding such communications and interactions also makes the practical deployment of non-collusion assumptions more feasible and easier to maintain.

\looseness=-1 Note that the server and clients agree on two public parameters: \textit{i)} the desired differential privacy guarantee ($\varepsilon$,$\delta$); and \textit{ii)} the security parameter $\kappa$ used for the secret sharing scheme.

\section{\name Design}
\label{sec:method}

\begin{algorithm2e*}[t!]
\algsetup{linenosize=\tiny}
\small
\DontPrintSemicolon
\SetKwComment{Comment}{{\scriptsize$\triangleright$\ }}{$\quad\quad$}
\caption{\name}\label{alg:mechanism1}
        \KwIn{$N$ clients, one randomness server, one aggregation server, Truncated Shifted Discrete Laplace distribution $\text{TSDLap}(\cdot)$, DP guarantee $(\varepsilon,\delta)$,\\ $\threshold$-out-of-$N$ secret-sharing scheme $\Pi_{\threshold,N}$, public key parameter $\text{pp}$, hash function $H(\cdot)$}
        \KwOut{Clients' submissions revealed to the aggregation server}
\BlankLine
\begin{algorithmic}[1]
\STATE $(\varepsilon_{\text{Re}},\delta_{\text{Re}}),(\varepsilon_{\text{Unre}},\delta_{\text{Unre}}) \leftarrow (\varepsilon,\delta)$ \Comment*[r]{All parties agree on sample-and-threshold (Re) and dummy-data (Unre) DP guarantees}
\STATE $p_s,\threshold \leftarrow (\varepsilon_{\text{Re}},\delta_{\text{Re}})$ \Comment*[r]{Computing sampling rate and aggregation threshold} 
\FOR{$i  = 1,\ldots, N$} 
\STATE $r_i=\textit{Client-RandomnessServer}(x_i,\text{pp},H(\cdot))$ \Comment*[r]{Oblivious and verifiable randomness generation (Algorithm~\ref{alg:client-RandomnessServer})}
\STATE $ \text{sbm}, \_ \leftarrow \textit{LocalSecretSharing}(x_i,r_i,\Pi_{\threshold,N})$ \Comment*[r]{Each client locally encrypts their data (Algorithm~\ref{alg:client-encoding})}
\STATE $z_i=\text{Random}([0,1])$ \Comment*[r]{Each client locally performs a Bernoulli test to decide whether to participate}
\IF{$z_i \leq p_s$}

\STATE Submit sbm to the aggregation server
\ENDIF
\ENDFOR
\STATE $\text{Dummy}=\textit{DummyDataCreation}(\threshold,\text{TSDLap}(\cdot),(\varepsilon_{\text{Unre}},\delta_{\text{Unre}}))$ \Comment*[r]{Dummy data creation to protect unrevealed submissions (Algorithm~\ref{alg:dummydata})}
\STATE Submit Dummy to the aggregation server
\STATE $\text{ReceivedData}=(\text{sbm}\cup\text{Dummy})$ \Comment*[r]{Received encrypted data containing indistinguishable dummy and real messages}
\STATE $\text{RecoveredData}=\textit{Aggregation}(\text{ReceivedData})$ \Comment*[r]{The aggregation server performs data aggregation and recovery (Algorithm~\ref{alg:aggregation})}
\STATE \textbf{Return} $\text{RecoveredData}$
\end{algorithmic}
\end{algorithm2e*}

\begin{algorithm2e}[t!]
\algsetup{linenosize=\tiny}
\small
\DontPrintSemicolon
\SetKwComment{Comment}{{\scriptsize$\triangleright$\ }}{$\quad\quad$}
\caption{\textit{Client-RandomnessServer}: Interaction between clients and the randomness server}\label{alg:client-RandomnessServer}
        \KwIn{A client holding a private item $x$, a randomness server, a secret key $\text{msk}$, hash function $H(\cdot)$}
        \KwOut{Randomness $r$}
\BlankLine
\begin{algorithmic}[1]    
    \STATE $h=H(x)$ \Comment*[r]{Client hashes its value} 
    \STATE $r' \leftarrow R$ \Comment*[r]{Client generates a random value}
    \STATE $b=h^{r'}$ \Comment*[r]{Client sends blinded hash to the server} 
    \STATE $\text{z}=b^{\text{msk}}$ \Comment*[r]{Server responds with its ZKproof}
    \STATE $w=\text{z}^{\frac{1}{r'}}$ \Comment*[r]{Client unblinds the response}
    \STATE $r=H(w,x)$ \Comment*[r]{Client obtains the randomness}
    \STATE \textbf{Return} $r$
\end{algorithmic}
\end{algorithm2e}

We present the design of our novel DP and secure system, called \name.  \name requires no communication between clients, and only requires two \emph{non-cooperating servers} (one that operates an oblivious and verifiable pseudorandom function~\cite{tyagi2022fast}, and one that aggregates and learns threshold-meeting values from clients).

At a high level, \name (Algorithm~\ref{alg:mechanism1}) works as follows: \textit{i)} Each client independent of other clients obliviously communicates with the randomness server and encrypts its data; \textit{ii)} Each client performs a Bernoulli test on whether to participate: with probability $p_s$ it participates and sends its encrypted data to the server, otherwise it abstains; \textit{iii)} A randomly selected client submits dummy data by creating groups of dummy data for each possible group of unrevealed items in $\{1, ..., \tau-1\}$ to bound the information that the aggregation server might learn from unrevealed submissions; \textit{iv)} The aggregation server receives real submissions and dummy data, and performs the decoding such that it learns aggregate submissions shared by at least $\tau$ clients in the sampled set. Note that the system is designed such that dummy data does not impact the correctness and utility of the aggregations (see Section~\ref{sec:dummydataInject}). In the rest of this section, we describe each of these steps in detail.

\subsection{Oblivious \& Verifiable Randomness} 

Each client starts by sampling randomness $r$ from the randomness server that runs a Verifiable Oblivious PseudoRandom Function (VOPRF)~\cite{CCS:DSQGLH22} that adheres to the standard ideal functionality~\cite{albrecht2021round} with security guarantees proven in the Universal Composability framework~\cite{jarecki2014round}. This VOPRF construction allows clients that contribute the same original value to consistently receive the same randomness $r$, while ensuring that:
i) the randomness server does not learn the clients' values;
ii) the server cannot detect when multiple clients share the same input;
iii) clients do not learn the server's PRF keys; and
iv) no communication is required between clients.

\myparagraph{Server-side setup.} The randomness server initializes the VOPRF by generating public cryptographic parameters $\text{pp} \leftarrow \text{VOPRF.setup}(1^\kappa)$ given the security parameter $\kappa$. The randomness server then generates a keypair $(\text{msk}, \text{mpk}) \leftarrow \text{KeyGen}(\text{pp})$, consisting of a secret key $\text{msk}$ and a public key $\text{mpk}$, using a Key Generation algorithm $\text{KeyGen}$ parameterized by $\text{pp}$. 

Once the VOPRF setup is complete, each client interacts with the randomness server to obtain its randomness $r$, as described in Algorithm~\ref{alg:client-RandomnessServer} and outlined below.

\myparagraph{Client-side requests.} Each client produces a request using their input data as follows.
The client first samples a local blinding factor $r'$, then computes a blinded representation of their original data value $x$ as $b=H(x)^{r'}$. The client then sends the blinded value $b$ to the untrusted randomness server.

\myparagraph{Server-side responses.} Upon receiving $b$, the randomness server evaluates the VOPRF function by computing $z=b^{\text{msk}}$ using their secret key $\text{msk}$. The randomness server returns $z$ to the client. 

Finally, the client unblinds the received response($w=z^{1/{r'}}$) and derives the final pseudorandom output as ($r=H(w,x)$).

\begin{algorithm2e*}[t!]
\algsetup{linenosize=\tiny}
\small
\DontPrintSemicolon
\SetKwComment{Comment}{{\scriptsize$\triangleright$\ }}{$\quad\quad$}
\caption{\emph{LocalSecretSharing}: Client data encoding}\label{alg:client-encoding}
        \KwIn{A client holding a data point $x$, randomness $r$, $\threshold$-out-of-$N$ secret-sharing scheme $\Pi_{\threshold,N}$}
        \KwOut{Encoded data $\text{sbm}$}
\BlankLine
\begin{algorithmic}[1]
\STATE $r_1,r_2,r_3 \gets H(r_a\|1), H(r_b\|2), H(r_c\|3)$ such that $r_a\|r_b\|r_c = r$ \Comment*[r]{Parsing the randomness to three random values}
\STATE $ \text{Key} = \text{PseudorandomGenerator}(r_1)$ \Comment*[r]{Deriving a symmetric key using pseudorandom generator with $r_1$ as the seed}
\STATE $\mathbf{c} \leftarrow {\text{Enc}}({\text{Key},\mathbf{x}})$ \Comment*[r]{Encrypting the data and generating a ciphertext}
\STATE $ t \leftarrow r_3$ \Comment*[r]{Generating a tag for the data using $r_3$}
\STATE $ s = \Pi_{\threshold,N}(r_1;r_2)$ \Comment*[r]{Constructing a secret-share of the  random value $r_1$ used for deriving the encryption key}
\STATE ${\text{sbm} \leftarrow (\mathbf{c}, s, t)}$ \Comment*[r]{Creating a tagged submission for the data}
\STATE \textbf{Return} sbm, Key
\end{algorithmic}
\end{algorithm2e*}

\begin{algorithm2e}[t!]
\algsetup{linenosize=\tiny}
\small
\DontPrintSemicolon
\SetKwComment{Comment}{{\scriptsize$\triangleright$\ }}{$\quad\quad$}
\caption{\textit{DummyDataCreation}: Create groups of dummy data}\label{alg:dummy}
        \KwIn{A public thresholding value $\threshold$, Truncated Shifted Discrete Laplace distribution $\text{TSDLap}(\cdot)$, DP guarantees $(\varepsilon_{\text{Unre}},\delta_{\text{Unre}})$}
        \KwOut{A set of dummy data}
\BlankLine
\begin{algorithmic}[1] 
    \STATE $\text{Dummy}=\{\}$ \Comment*[r]{The set containing groups of dummy data}
    \STATE Select a client for creating dummy data
    \FOR{$i = 1,\ldots,\threshold-1$}
    \STATE $\alpha \leftarrow \text{TSDLap}(\lambda=2/\varepsilon_{\text{Unre}},t=2+2/\varepsilon_{\text{Unre}}\log(2/\delta_{\text{Unre}}))$ 
    \STATE $\{\text{t}_j\}_{j=1}^\alpha=\text{UniqueTagGenerator}(\alpha)$ \Comment*[r]{Unique tags}    
    \FOR{$j  = 1,\ldots, \alpha$}
    \STATE $s_j=\{(c_j,s_j,\text{t}_j)^i\}$ \Comment*[r]{The client creates a set containing $i$ zero-value items with the same unique tag}
    \STATE $\text{Dummy}.\text{append}(s_j)$
    \ENDFOR
    \ENDFOR
    \STATE \textbf{Return} $\text{Dummy}$
\end{algorithmic}\label{alg:dummydata}
\end{algorithm2e}

\subsection{Local Data Preparation and Submission}
To secret-share the data (Algorithm~\ref{alg:client-encoding}), the client parses $r$ into $\{{r_1},{r_2},{r_3}\}$ using a random oracle model hash function. Each of these three randomness components are used for different purposes.

$r_1$ is used to seed a PR generator function and derive a {${\text{Key}}$} for a symmetric encryption scheme which satisfies IND-CPA security and consists of two algorithms: i) encryption: producing ciphertext of a data with key; and ii) decryption: outputting a data given its ciphertext under the key. Using the encryption algorithm, we obtain the encrypted client's input data $\mathbf{c}=\text{Enc}(\text{Key},\mathbf{x})$. $r_2$ is used as the randomness input to $\Pi_{\threshold,N}$ for producing a random $\threshold$-out-of-$N$ share\footnote{Note that our implementation of $\threshold$-out-of-$N$ secret sharing produces random shares without any client’s identity.} $s_k \in F_q$ of $r_1$. $\Pi_{\threshold,N}$ operates in a finite field $F_q$ for some prime $q >0$ with information-theoretic security~\cite{bellare2020reimagining}. $\Pi_{\threshold,N}$ consists of two algorithms: i) share: producing a random share of the data with a particular randomness; ii) recover: reconstructing the data given at least its $\threshold$ valid shares. $r_3$ is used as a \emph{tag} informing the aggregation server which shares to combine to recover the encryption key. 
Each client constructs their message as {${\text{sbm} \leftarrow (\mathbf{c}, s, r_3)}$}.

Then, each client performs a Bernoulli test on whether to participate: with probability $p_s=n/N$ (where $n$ is the expected size of the sampled clients) it sends its encrypted message {${\text{sbm} \leftarrow (\mathbf{c}, s, r_3)}$} to the aggregation server, otherwise it abstains.

\subsection{Dummy Data Injection}
\label{sec:dummydataInject}
While the aggregation server cannot recover the data values submitted by less than \threshold clients, it does observe the tags of these unrevealed submissions.\footnote{The cost that we pay in favour of enabling the aggregation server to do the aggregation by itself without any interaction with other servers/clients in practical scenarios.} The aggregation server thus learns the multiplicity of unrevealed submissions sharing the same tag, which could potentially expose information about their underlying values if the server possesses additional side information. To control this leakage, \name adds dummy submissions such that the amount of information that the aggregation server can learn about these tags is bounded within the DP guaranteed range (Algorithm~\ref{alg:dummy}\footnote{The efficient version of Algorithm~\ref{alg:dummy-modified} used in our security proof (Appendix~\ref{app:secproof}). The only difference is that the client constructs dummy submissions locally instead of interacting with the randomness server, trivially tolerated by our security proof.}). Dummy data makes the histogram of unrevealed submissions differentially private.  This dummy data injection can be done by randomly selecting a client, and it does not affect the correctness and utility of the aggregations, as dummy data is automatically filtered out because each dummy group is smaller than the threshold (see line~3 in Algorithm~\ref{alg:dummydata}).

We use the truncated shifted discrete Laplace distribution supported on $\{0,...,2t\}$, denoted by $\text{TSDLap}(\lambda,t)$,  to generate a positive, bounded number of dummy data. 

\begin{definition}[Truncated Shifted Discrete Laplace Distribution] The Truncated Laplace Distribution on $\{0, ..., 2t\}$ is defined as:
\begin{equation}
f_{\text{TSDLap}(\lambda, t)}(c)=
\begin{cases}
    \frac{\exp{(-\frac{|c-t|}{\lambda})}}{A}, & \text{if $c \in \{0,...,2t\}$}\\
    0, & \text{otherwise,}
\end{cases}
\end{equation}
where $\lambda \in (0,1)$ is the scale parameter and $A=\sum_{c=0}^{2t} \exp{\left(-\frac{|c-t|}{\lambda}\right)}=1 + 2 \sum_{c=1}^{t} \exp\left(-\frac{c}{\lambda}\right)$ is the normalization constant.

\end{definition}

Section~\ref{sec:analysis} proves that subsampling clients, combined with dummy data and thresholding, provides DP guarantees~\cite{li2011provably,cormode2022sample}.

\begin{algorithm2e*}[t!]
\algsetup{linenosize=\tiny}
\small
\DontPrintSemicolon
\SetKwComment{Comment}{{\scriptsize$\triangleright$\ }}{$\quad\quad$}
\caption{\emph{Aggregation}: Aggregating and recovering client data}\label{alg:aggregation}
        \KwIn{Received $M$ clients' submissions $\{\text{sbm}_i\}_{i=1}^M$ where each submission ${\text{sbm}_i \leftarrow (\mathbf{c}, s, t)}$ contains an encrypted data $\mathbf{c}$, a random $\threshold$-out-of-$N$ share $s$ and a tag $t$, $\threshold$-out-of-$N$ secret-recovering scheme $\Pi_{\threshold,N}^{-1}$, \text{PseudorandomGenerator}}
        \KwOut{Decoded clients' data $\text{RecoveredData}$}
\BlankLine
\begin{algorithmic}[1]
\STATE $\text{RecoveredData}=\{\}$
\STATE $ \{\text{groups}\}= \text{GroupBasedOnTags}(\text{sbm}_i)$ \Comment*[r]{Grouping submissions based on their tags $t$ such that all submissions in each group share the same tag}
\FOR{$\text{group} \in \text{groups}$}
    \STATE $(r_1,\perp) \leftarrow \Pi_{\threshold,N}^{-1}(\{s_k\} \quad \forall{s_k \in  \text{group}})$ \Comment*[r]{Recovering the share if the group contains at least $\threshold$ submissions (i.e., secret-shares)}
    \IF{$r_1$}
    \STATE $ \text{Key} = \text{PseudorandomGenerator}(r_1)$  \Comment*[r]{Recovering the decryption key using $r_1$ to seed the pseudorandom generator}
    \STATE $\mathbf{x}=\text{Dec}(\text{Key},\mathbf{c})$ \Comment*[r]{Decrypting one of the data within the group}
    \STATE $\text{RecoveredData.append}(\underbrace{\mathbf{x}, \mathbf{x}, \dots, \mathbf{x}}_{|\text{group}| \text{ times}})$ \Comment*[r]{Outputting as many data as the size of the group}
    \ENDIF
    \ENDFOR
\STATE \textbf{Return} $\text{RecoveredData}$
\end{algorithmic}
\end{algorithm2e*}

\subsection{Data Aggregation and Recovery}

Clients submit their secret-shared values to the aggregation server through an anonymizing proxy, delinking the submitted value from any other information identifying the submitter (\eg{} IP address, etc). We deploy an Oblivious HTTP~\cite{I-D.ohai-http-oblivious} server which is an IETF draft standard. Oblivious HTTP removes client-identifying information from HTTP requests containing client submissions to blind the aggregator server from learning which client is submitting which reports, and which reports are being submitted by the same user. Following the literature,\footnote{\url{https://machinelearning.apple.com/research/learning-with-privacy-at-scale##AppleSecurity}} we assume that submissions do not contain timestamps of when data is generated. In practical deployments where timestamps are observed, we need to make the distribution of each timestamp independent of the messages and their source such that it gives no additional information about the sender. This could be done in various ways. For instance, whenever a client's data is encoded on their device, 
the client locally draws a real number $\texttt{num}$ uniformly into $[0,1]$ and send the message at time $\texttt{num}$.

The aggregation server then recovers any values submitted by at least \threshold clients using the share recovery algorithm on the corresponding share values, $s_k$, to recover $r_1$ and thus the corresponding data value.
As described in Algorithm~\ref{alg:aggregation}, the aggregation groups submissions based on their tags such that all submissions in each
group share the same tag. Then, the aggregation can learn the submission within groups with cardinality of at least \threshold through performing the following sequential recoveries: 1) the share value $s$ from its \threshold secret
shares $s_k$; 2) $r_1$ from $s$; 3) the encryption key $\text{Key}$ from $r_1$; 5) the client submission using $\text{Key}$ as the decryption key.

\section{Privacy, Security, Utility and Communication Analysis}
\label{sec:analysis}

In this section, we analytically demonstrate that \name is a \emph{secure} protocol for producing \emph{private} and highly \emph{accurate} data outputs with \emph{low communication costs}.

\subsection{Privacy Analysis}

\begin{theorem}\label{theorem:DP}
   Consider $N$ clients generating a dataset $D=\{x_i\}_{i=1}^N$. Let $\varepsilon_{\text{Unre}}$ be the privacy budget used in the creation of dummy data (Algorithm~\ref{alg:dummy}). For $\varepsilon_{\text{Re}}>0$ and $\delta_{\text{Re}}\in(0,1)$, let {${p_s=\alpha (1-e^{-\varepsilon_{\text{Re}}})}$} and ${\threshold=\frac{1}{C_{\alpha}}\ln{(\frac{1}{\delta_{\text{Re}}})}}$ where $0 < \alpha \leq 1$ and $C_{\alpha}=\ln{(\frac{1}{\alpha})}-\frac{1}{1+\alpha}$. Then, the view of the aggregation in \name satisfies computational $(\varepsilon,\delta)$-DP  with $\varepsilon=\max(\varepsilon_{\text{Unre}}, \varepsilon_{\text{Re}})$ and $\delta=\max(\delta_{\text{Unre}}, \delta_{\text{Re}})$.
\end{theorem}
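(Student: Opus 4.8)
The plan is to exhibit the aggregation server's view as the output of two essentially decoupled sub-mechanisms --- a sample-and-threshold release and a noisy-histogram release --- to analyze each separately, and then to combine them. First I would fix neighbors $D$ and $D' = D \setminus \{x_i\}$, write $v = x_i$ and let $m$ be the multiplicity of $v$ in $D$. Using the information-theoretic security of $\Pi_{\threshold,N}$ (the shares and ciphertexts of any tag held by fewer than $\threshold$ sampled clients reveal nothing about the underlying value), the server's view is determined by: (i) the revealed histogram $R$, i.e.\ the values whose sampled count reaches $\threshold$ together with those counts; and (ii) for the unrevealed tags, only the multiplicity profile $(n_1,\dots,n_{\threshold-1})$ counting how many distinct unrevealed tags were submitted by exactly $j$ clients, to which Algorithm~\ref{alg:dummy} adds independent $\text{TSDLap}$ noise per coordinate. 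Since the per-client Bernoulli samples and the dummy noise are mutually independent and the only quantity whose law differs between $D$ and $D'$ is the sampled count $C_v$ of $v$ (distributed $\mathrm{Bin}(m,p_s)$ versus $\mathrm{Bin}(m-1,p_s)$), the whole view can be written $O = \Psi(C_v, W)$ for a randomized map $\Psi$ absorbing the dummy noise, where $C_v$ is independent of $W$ and $W$ has the same law under $D$ and $D'$. Hence it suffices to bound the hockey-stick divergence of $\Psi(\mathrm{Bin}(m,p_s),W)$ against $\Psi(\mathrm{Bin}(m-1,p_s),W)$ uniformly in $m$, and by conditioning on $W$ one may treat $\Psi(\cdot,w)$ as a fixed channel.

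Second I would analyze the two ``halves'' of $\Psi(\cdot,w)$. On the event $C_v \ge \threshold$ the value is revealed and the unrevealed profile does not see $v$, so the loss is exactly that of the sample-and-threshold mechanism applied to the coordinate of $v$: one must show that $\mathrm{Bin}(m,p_s)$ and $\mathrm{Bin}(m-1,p_s)$, pushed through the map ``collapse everything below $\threshold$ to $\bot$'', are $(\varepsilon_{\text{Re}},\delta_{\text{Re}})$-close for every $m$. This is where the stated calibration enters: with $p_s = \alpha(1-e^{-\varepsilon_{\text{Re}}})$ one controls the pointwise ratio $\Pr[\mathrm{Bin}(m,p_s)=j]/\Pr[\mathrm{Bin}(m-1,p_s)=j] = \tfrac{m}{m-j}(1-p_s)$ together with the monotonicity $\Pr[\mathrm{Bin}(m,p_s)\ge\threshold]\ge\Pr[\mathrm{Bin}(m-1,p_s)\ge\threshold]$, and optimizing the residual term over $m$ is what produces the constant $C_\alpha = \ln(1/\alpha)-\tfrac{1}{1+\alpha}$ and the choice $\threshold = \tfrac{1}{C_\alpha}\ln(1/\delta_{\text{Re}})$ --- in particular this forces $p_s^{\threshold}\le\delta_{\text{Re}}$, which kills the only infinite-loss event. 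I would import this estimate from the sample-and-threshold analysis of~\cite{cormode2022sample,li2011provably}. On the event $C_v < \threshold$ the output reveals nothing about $v$ beyond one extra unit added to one of the coordinates $1,\dots,\threshold-1$ of the profile, and changing $m$ by one moves that unit by at most one multiplicity level --- an $\ell_1$-change of at most $2$ in a vector each of whose coordinates carries independent $\text{TSDLap}(\lambda=2/\varepsilon_{\text{Unre}},\,t=2+\tfrac{2}{\varepsilon_{\text{Unre}}}\ln(2/\delta_{\text{Unre}}))$ noise. The truncated discrete Laplace mechanism with precisely these parameters is $(\varepsilon_{\text{Unre}},\delta_{\text{Unre}})$-DP here, the truncation contributing the $\delta_{\text{Unre}}$ slack through a one-line tail bound on $\text{Lap}(2/\varepsilon_{\text{Unre}})$.

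Finally I would combine the two halves. Since for every realization of the sampling the value $v$ is either revealed or unrevealed but never both, the ``revealed'' and ``unrevealed'' portions of $O$ live on disjoint regions of the output space, so the divergence of $O$ is governed by the worse of the two analyses rather than by their composition; this parallel-composition-style reasoning is exactly what yields $\varepsilon = \max(\varepsilon_{\text{Unre}},\varepsilon_{\text{Re}})$ and $\delta = \max(\delta_{\text{Unre}},\delta_{\text{Re}})$, and it is why one can afford to spend the full budget on each sub-mechanism. The genuinely delicate point --- and the step I expect to be the main obstacle --- is the reveal boundary: when $C_v$ equals $\threshold$ under $D$ and $\threshold-1$ under $D'$, the client migrates from the unrevealed profile into the revealed histogram, so the two sub-mechanisms are not perfectly decoupled. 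I would handle this by arguing that the migration event has probability at most $\delta_{\text{Re}}$ under the chosen $\threshold$ (so it can be charged to the $\delta$ budget, and the conditional laws on its complement then genuinely decouple), which is precisely what the calibration $\threshold = \tfrac{1}{C_\alpha}\ln(1/\delta_{\text{Re}})$ buys; making the constants in this boundary argument line up with the clean $\max$ in the statement, rather than a $\max$ plus lower-order terms, is the fiddly part.
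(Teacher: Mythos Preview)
Your proposal is correct and follows essentially the same route as the paper: split the server's view into the revealed-histogram part (handled by the sample-and-threshold analysis of~\cite{cormode2022sample} under the stated calibration of $p_s$ and $\threshold$) and the unrevealed multiplicity profile (handled by $\text{TSDLap}$ noise with $\ell_1$-sensitivity~$2$), then combine via a $\max$ because the differing client lands in exactly one of the two parts. Your treatment is in fact more careful than the paper's on the reveal-boundary event $C_v=\threshold$ versus $C_v'=\threshold-1$, which the paper simply folds into the case split without comment; your plan to charge that event to $\delta_{\text{Re}}$ is the right way to make the $\max$ rigorous.
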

\begin{proof}

Let $D$ and $D'$ be two neighboring input datasets such that $D'$ is obtained by removing one client's data from $D$. We aim to show that the view of the aggregation server satisfies (computational) $(\varepsilon,\delta)$-DP. We split our analysis according to two mutually exclusive events (see Figure~\ref{fig:final-output}): either the value corresponding to the extra client in $D$ is revealed (i.e., the corresponding count is greater than or equal to $\tau$), or it is not.

\begin{figure}
    \centering
    \includegraphics[width=0.5\textwidth]{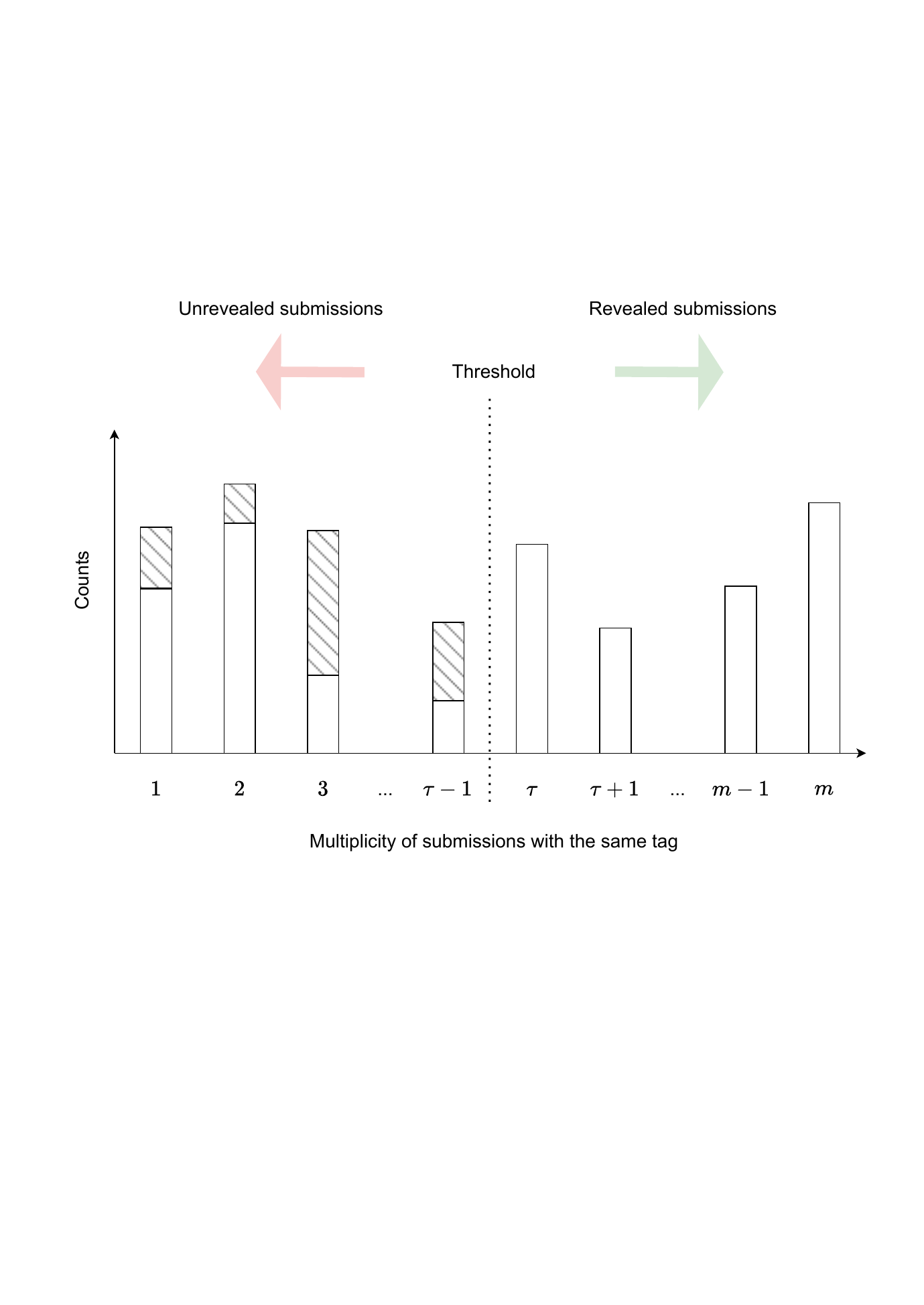}
    \caption{\name's output to the aggregation server consists of a histogram $\mathcal{H}$ of multiplicities where $\mathcal{H}_i$ represents the number of submissions with the same tag, with multiplicity $i$ and $i \in [m]$. This histogram is obtained based on submissions that each client sent with probability $p_s$ (empty bar) and dummy data (hatched bar).}
    \label{fig:final-output}
\end{figure}

\emph{Unrevealed submission.} For unrevealed submissions, the server only learns the multiplicity of submissions with the same tag. Let $\mathcal{Z}_{\text{Unre}}$ be the histogram of the unrevealed submissions computed as $\mathcal{Z}+\mathcal{N}$ where  $\mathcal{Z}$ is the histogram of multiplicities of the ``genuine'' submissions (i.e., $\mathcal{Z}_i$ counts the number of unrevealed tags with multiplicity $i$) and $\mathcal{N}$ is the noise corresponding to the addition of dummy contributions. Recall that dummy data are drawn from a domain disjoint from the original domain, so adding $i$ dummies with the same tag is equivalent to adding noise of value one to the $i$-th histogram entry $\mathcal{Z}_i$. As $\mathcal{Z}$ does not have any multiplicity above $\tau-1$, the protocol only adds $i \in [\tau-1]$ different such contributions. We know that noise $\mathcal{N}$ sampled from the truncated shifted discrete Laplace distribution $\text{TSDLap}(\lambda,t)$ on $\{0,...,2t\}$ with {$\lambda=\Delta/\varepsilon_{\text{Unre}}$} and $t=\Delta+\Delta/\varepsilon_{\text{Unre}}\log(2/\delta_{\text{Unre}})$ to ensure  $(\varepsilon_{\text{Unre}},\delta_{\text{Unre}})$-DP~\cite{bell2022distributed}.  We compute the sensitivity $\Delta$ as follows. Removing a client's data from $D$ decreases the count of the corresponding multiplicity $i$ by one while increasing the count of multiplicity $i-1$ by one, resulting in $\mathcal{Z}$ and $\mathcal{Z}'$ (computed on $D$ and $D'$ respectively) that differ in two adjacent entries $i$ and $i-1$:
\begin{equation}
\begin{cases}
   \mathcal{Z}_i &= \mathcal{Z}'_i + 1 \quad \text{entry $i$} \\   
   \mathcal{Z}_{i-1} &= \mathcal{Z}'_{i-1} - 1 \quad \text{entry $i-1$} \\
   \mathcal{Z}_{y} &= \mathcal{Z}'_{y} \quad \text{other entries $\forall y \notin \{i,i-1\}$}
\end{cases}   
\end{equation}
Therefore the sensitivity $\Delta=2$.

\emph{Revealed submission.} In the event where the differing submission is revealed, we can leverage DP guarantees of the sample-and-threshold approach~\cite{cormode2022sample}. For completeness and clarity, we give the full proof below. The bound on the ratio of the probability of the aggregation server receiving and decoding a group of submissions with the same tag and multiplicity $i$ on $D$ and $D'$ is computed as follows. Let $k$ be the multiplicity of the extra client's data item in $D$. The probability of seeing a count of $v \geq \tau$ copies of this item in the output of $D$ is given by the Binomial theorem:
\begin{equation}
    \label{eq:1}
    \binom{k}{v}(1-p_s)^{k-v}(p_s)^v,
\end{equation}
and the probability of seeing a count of $v$ copies of the same data in the output of $D'$ who holds $k-1$ copies of the data is 
\begin{equation}
    \label{eq:2}
    \binom{k-1}{v}(1-p_s)^{((k-1)-v)}(p_s)^v.
\end{equation}
Now, we can bound the ratio of probabilities of seeing data with a given count $v$ by dividing Eq.~\ref{eq:1} by Eq.~\ref{eq:2} which is $\frac{(1-p_s)k}{k-v}$.

Next, we show that the ratio $\frac{(1-p_s)k}{k-v}$ is between the interval $(e^{-\varepsilon_{\text{Re}}}, e^{\varepsilon_{{\text{Re}}}})$ except with some small probability. 

For the lower bound, we have
\begin{equation}
    \label{eq:3}
    e^{-\varepsilon_{\text{Re}}} \leq \frac{(1-p_s)k}{k-v},
\end{equation}
for any $v \geq 0$, which is satisfied if we ensure $p_s \leq 1-e^{-\varepsilon_{\text{Re}}} < 1$ (since $v = 0$ is the worst case).

For the upper bound, we have:

\begin{equation}
    \label{eq:4}
    \frac{(1-p_s)k}{(k - v)} \leq e^{\varepsilon_{\text{Re}}}.
\end{equation}

Rearranging the upper bound, we have:

\begin{equation}
    v \leq k(1 -e^{-\varepsilon_{\text{Re}}} + e^{-\varepsilon_{\text{Re}}} p_s)
\end{equation}

Note that:
\begin{enumerate}
    \item Since $p_s < 1$, then $p_s(1 - e^{-\varepsilon_{\mathcal{A}_{\text{Re}}}}) < 1 - e^{-\varepsilon_{\text{Re}}}$ and so $p_s < (1 - e^{-\varepsilon_{\text{Re}}} + e^{-\varepsilon_{\text{Re}}} p_s)$.
    \item The bound in \cref{eq:4} is greater than $kp_s$, the mean value.
    \item Since $p_s < 1$, then $(1 - e^{-\varepsilon_{\text{Re}}} + e ^{-\varepsilon_{\text{Re}}}p_s) < 1$, so we can define the probability $q = (1 - e^{-\varepsilon_{\text{Re}}} + e ^{-\varepsilon_{\text{Re}}}p_s)$.
\end{enumerate}

As all revealed submissions have a count at least equal to $\tau$, we can thus obtain $(\varepsilon_{\text{Re}}, \delta_{\text{Re}})$-DP by bounding the probability $\delta_{\text{Re}}$ of choosing a $v$ that is more than $\max(kq,\tau)$. Using the Chernoff-Hoeffding bound for the binomial distribution as done in~\cite{cormode2022sample}, we get $\delta_{\text{Re}} \leq \exp(-\frac{\tau}{q}D(q\|p))$. Therefore, sampling with probability $p_s=\alpha (1-e^{-\varepsilon_{\text{Re}}})$ and thresholding with ${\threshold=\frac{1}{C_{\alpha}}\ln{(\frac{1}{\delta_{\text{Re}}})}}$ where $0 < \alpha \leq 1$ and $C_{\alpha}=\ln{(\frac{1}{\alpha})}-\frac{1}{1+\alpha}$ provides $(\varepsilon_{\text{Re}},\delta_{\text{Re}})$-DP~\cite{cormode2022sample}. 
\end{proof}

Note that the aggregation server can publish the final histogram, which satisfies DP with the same $\varepsilon$ and $\delta$. Also, as discussed in Section~\ref{sec:problem_formulation} and Section~\ref{sec:method}, the randomness server learns nothing about the plaintext version of the client's data thanks to the underlying cryptographic schemes.

\subsection{Cryptographic Security}
Figure~\ref{functionality:nebula} represents the ideal functionality for \name. We leverage the methodology of~\cite{CCS:DSQGLH22} for producing consistent data encryption. Appendix~\ref{app:secproof} provides all correctness and security proofs of \name following similar arguments to~\cite{CCS:DSQGLH22}.

\begin{figure}
\begin{tcolorbox}[title=Ideal Functionality $\mathcal{F}_\text{Nebula}$]
\textbf{Participants:} 
\begin{itemize}[leftmargin=1.5em]
\item Aggregation server $S_A$ 
\item Randomness server $S_R$  
\item Clients $\{C_i\}_{i=1}^N$
\end{itemize}
\textbf{Public parameters:}
\begin{itemize}[leftmargin=1.5em]
    \item DP parameters $\varepsilon, \delta$ where $\varepsilon=\max(\varepsilon_{\text{Unre}}, \varepsilon_{\text{Re}})$ and $\delta=\max(\delta_{\text{Unre}}, \delta_{\text{Re}})$
    \item Noise parameters $\lambda=\frac{2}{\varepsilon_{\text{Unre}}}$ and $t=2+\frac{2}{\varepsilon_{\text{Unre}}} \log(\frac{2}{\delta_{\text{Unre}}})$.
    \item Threshold ${\threshold=\frac{1}{C_{\alpha}}\ln{(\frac{1}{\delta_{\text{Re}}})}}$ and Subsampling rate $p_s=\alpha (1-e^{-\varepsilon_{\text{Re}}})$
\end{itemize}

\textbf{Inputs:}
\begin{itemize}[leftmargin=1.5em]
        \item Client $C_i \in \{C_i\}^N_{i=1}$: provides input $(x_i, \text{aux}_i)$
        \item $S_R$: provides VOPRF keypair $(\text{msk}, \text{mpk})$
        \item $S_A$: provides nothing $\perp$ 
\end{itemize}
\textbf{Functionality:}
\begin{enumerate}
    \item For each client $C_i \in \{C_i\}^N_{i=1}$, sample $b_i \gets \text{Bern}(p_s)$
    \item $\mathcal{C} \gets \{C_i | b_i=1\}$ 
    \item Sample $\{\alpha_i\}^{\threshold - 1}_{i=1}$ where each $\alpha_i$ is sampled independently from $\alpha_i \leftarrow \text{TSDLap}(\lambda=2/\varepsilon_{\text{Unre}},t=2+2/\varepsilon_{\text{Unre}}\log(2/\delta_{\text{Unre}}))$. 
    \item For each $\alpha_i$, construct $\alpha_i$ groups of dummy clients of size $i$ with input $(\omega_{i,j}, \text{aux}_{i,j})$ for all $j \in \alpha_i$. Each $\omega_{i,j}$ is a distinct measurement outside the set of client measurements $\{x_i\}^N_{i=1}$. Call the set of all dummy clients $\mathcal{D}$.
    \item For each unique $x_\ell$ received from $\mathcal{C}\cup\mathcal{D}$, construct:
    \[
    \mathcal{E}_\ell = \left\{(x_\ell, \{x_j\}_{j \in J}, \threshold_\ell) : \left(J \subseteq [N]\right) \wedge \left(x_j = x_\ell\right) \right\}
    \]
    where $\threshold_\ell = \left|\{\text{x}_j\}_{j \in J}\right|$ is the number of sampled client measurements in $\mathcal{E}_\ell$.
    \item Let $\mathcal{Y}$ be an empty map.
    \item For each $\mathcal{E}_\ell$ where $\threshold_\ell \geq \threshold$, set $\mathcal{Y}[x_\ell] = \mathcal{E}_\ell$.
\end{enumerate}
\textbf{Outputs:}
\begin{itemize}[leftmargin=1.5em]
        \item $C_i$: learns nothing $\perp$ 
        \item $S_R$: learns what it would normally learn during the VOPRF exchange, $\{\mathcal{F}_{\text{VOPRF}}(\text{msk},x_i)\}_{i=1}^N$ 
        \item $S_A$: learns a $(\varepsilon, \delta)$ differentially private histogram of clients' data, $\mathcal{Y}$, and the cardinality of each group of inputs $n_\ell \equiv |S_\ell|$ where $S_\ell$ is the set of submissions from $\mathcal{C} \cup \mathcal{D}$ which share the same unique measurement $x_\ell$. 
\end{itemize}
\end{tcolorbox}
\caption{Ideal functionality for \name.}\label{functionality:nebula}
\end{figure}

\subsection{Communication Analysis}
\label{sec:comm-analysis}
Each client performs only one round of interaction with the randomness server to obtain the necessary randomness for the secret sharing. In particular, each client submits a 32 byte message consisting of their blinded hash value (see $b$ line 3 in Algorithm~\ref{alg:client-RandomnessServer}). In response, the client receives another 32 bytes (see $z$ line 5 in Algorithm~\ref{alg:client-RandomnessServer}) from the randomness server. Each client performs only one single interaction with the aggregation server. In particular, each client submits a 266 byte message, sbm (see line 6 in Algorithm~\ref{alg:client-encoding}), consisting of an alignment tag (32 bytes), a share of the encryption key (192 bytes), and their value (approximately 42 bytes). In addition to this, one client needs to send dummy data to the aggregation server.

\begin{prop}
    The expected and worst-case number of dummy data is $t \frac{(\tau - 1)\tau}{2}$ and $2t\frac{(\tau - 1)\tau}{2}$ where $t=2+2/\varepsilon_{\text{Unre}}\log(2/\delta_{\text{Unre}})$ is the expectation of the truncated shifted discrete Laplace distribution and $\tau$ is the threshold for pruning values.
\end{prop}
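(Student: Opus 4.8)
The plan is to unwind the \textit{DummyDataCreation} procedure (Algorithm~\ref{alg:dummy}) and count its contributions iteration by iteration. In the $i$-th pass of the outer loop, for $i\in\{1,\dots,\threshold-1\}$, the selected client draws a single value $c_i\sim\text{TSDLap}(\lambda,t)$ with $\lambda=2/\varepsilon_{\text{Unre}}$ and $t=2+(2/\varepsilon_{\text{Unre}})\log(2/\delta_{\text{Unre}})$, generates $c_i$ fresh unique tags, and for each such tag appends a group consisting of exactly $i$ zero-valued dummy items. Hence the $i$-th pass adds precisely $i\cdot c_i$ dummy records, and the total number of dummy items produced by the procedure is $\sum_{i=1}^{\threshold-1} i\,c_i$.

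For the expected count I would apply linearity of expectation (no independence among the $c_i$ is needed), giving $\mathbb{E}\!\big[\sum_{i=1}^{\threshold-1} i\,c_i\big]=\sum_{i=1}^{\threshold-1} i\,\mathbb{E}[c_i]$. The key point is that $\mathbb{E}[c_i]=t$ for every $i$: by the Definition, the pmf $f_{\text{TSDLap}(\lambda,t)}(c)\propto\exp(-|c-t|/\lambda)$ is supported on $\{0,\dots,2t\}$, an interval symmetric about $t$, and it satisfies $f_{\text{TSDLap}(\lambda,t)}(t+u)=f_{\text{TSDLap}(\lambda,t)}(t-u)$, so the distribution is symmetric about $t$ and its mean equals $t$ (this uses that $2t$ is an integer, as the construction assumes). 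Substituting and using the arithmetic-series identity $\sum_{i=1}^{\threshold-1} i=\tfrac{(\threshold-1)\threshold}{2}$ yields the claimed expectation $t\,\tfrac{(\threshold-1)\threshold}{2}$.

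For the worst case I would use the deterministic support bound: every draw from $\text{TSDLap}(\lambda,t)$ lies in $\{0,\dots,2t\}$, so $c_i\le 2t$ for all $i$, whence $\sum_{i=1}^{\threshold-1} i\,c_i\le 2t\sum_{i=1}^{\threshold-1} i=2t\,\tfrac{(\threshold-1)\threshold}{2}$, with equality when each draw attains its maximum value $2t$.

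The only mild obstacle is justifying $\mathbb{E}[c_i]=t$ carefully, since truncation could in principle destroy the symmetry of an ordinary shifted discrete Laplace; here it does not, because the truncation window $\{0,\dots,2t\}$ is itself centered at $t$, so a one-line symmetry argument suffices. Everything else is a routine index count plus the sum of the first $\threshold-1$ positive integers.
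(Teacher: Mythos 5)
Your proof is correct and follows essentially the same route as the paper's: count the $i\cdot c_i$ items contributed in iteration $i$, use that $\text{TSDLap}(\lambda,t)$ has mean $t$ and maximum $2t$, and sum the arithmetic series $\sum_{i=1}^{\threshold-1} i=\tfrac{(\threshold-1)\threshold}{2}$. The only addition is your explicit symmetry argument for $\mathbb{E}[c_i]=t$, which the paper simply asserts; that is a welcome bit of rigor but not a different approach.
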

\begin{proof}
As discussed in Section~\ref{sec:method}, we use truncated shifted discrete Laplace distribution, $\text{TSDLap}(\lambda,t)$ on $\{0,...,2t\}$ to generate dummy data. The expectation of $\text{TSDLap}(\lambda,t)$ is $t$ and its maximum value is $2t$. We sample $\tau-1$ times from the truncated shifted discrete Laplace distribution and each time generate a group of submissions whose cardinality is the same as the bin value. Therefore, the expected and the maximum number of dummy submissions are $t \frac{(\tau - 1)\tau}{2}$ and $2t\frac{(\tau - 1)\tau}{2}$, respectively.
\end{proof}

As alignment tags for the dummy data are random they can be generated locally without any communication with the randomness server. However, the submitting client needs to send as many messages as the size of the group to the aggregation server.

\subsection{Utility Analysis}
The utility of \name is unaffected by the inclusion of dummy data: only the sampling rate $p_s$ and the threshold $\tau$ affect the accuracy of the histogram estimated by \name relative to the true histogram. Leveraging utility guarantees of sample-and-threshold approach~\cite{cormode2022sample}, we can show that \name reveals to the aggregation server, with high probability, any value whose frequency is sufficiently above the threshold.  

\begin{lemma}\label{th:utility}
    \name removes a value that is shared by $W$ clients with probability at most $\exp(-(p_sW-\threshold)^2\frac{1}{2Wp_s})$, where $p_s$ and $\threshold$ are \name's parameters: the client sampling rate and pruning threshold.
\end{lemma}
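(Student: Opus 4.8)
The plan is to reduce the claim to a one-sided concentration inequality for a binomial random variable. First I would make explicit the event ``\name removes the value'': a value held by $W$ clients survives the sampling-and-thresholding step precisely when at least $\threshold$ of those $W$ clients pass their independent Bernoulli($p_s$) participation test (line~4 of Algorithm~\ref{alg:mechanism1}), since only then does the recovered group have cardinality $\ge\threshold$. Hence the value is removed exactly when strictly fewer than $\threshold$ of the $W$ clients participate. Writing $X=\sum_{i=1}^{W}z_i$ for the number of participating clients among these $W$, with the $z_i$ i.i.d.\ Bernoulli($p_s$), we have $X\sim\mathrm{Bin}(W,p_s)$ with mean $\mu=p_sW$, and the removal probability equals $\Pr[X<\threshold]\le\Pr[X\le\threshold]$.

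Next I would apply the multiplicative Chernoff bound to the \emph{lower} tail of $X$, in the regime $\threshold<p_sW$ (which is exactly the ``frequency sufficiently above the threshold'' hypothesis implicit in the surrounding text; outside this regime the stated bound is not meaningful as a utility guarantee). Set $\delta=1-\threshold/\mu=(p_sW-\threshold)/(p_sW)\in(0,1)$, so that $(1-\delta)\mu=\threshold$. The standard lower-tail bound $\Pr[X\le(1-\delta)\mu]\le\exp(-\mu\delta^2/2)$ then gives
\[
\Pr[X\le\threshold]\;\le\;\exp\!\Big(-\tfrac{\mu\delta^2}{2}\Big)\;=\;\exp\!\Big(-\tfrac{(p_sW-\threshold)^2}{2\,p_sW}\Big),
\]
which is the asserted bound $\exp(-(p_sW-\threshold)^2\tfrac{1}{2Wp_s})$.

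There is essentially no hard step here; the only bookkeeping points are (i) the harmless passage from the strict inequality $X<\threshold$ to $X\le\threshold$, which is fine since we only need an upper bound, and (ii) flagging the $p_sW>\threshold$ regime so that $\delta\in(0,1)$ and the Chernoff bound applies. If one prefers an additive form, Hoeffding's inequality $\Pr[X\le\mu-a]\le e^{-2a^2/W}$ with $a=p_sW-\threshold$ gives a competing estimate, but the exponent stated in the lemma is precisely the multiplicative-Chernoff expression, so that is the route I would follow; it is also the same family of Chernoff–Hoeffding estimates invoked in the privacy analysis above and inherited from~\cite{cormode2022sample}.
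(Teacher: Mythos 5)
Your proof is correct and follows exactly the route the paper intends: the paper states this lemma without proof, deferring to the sample-and-threshold utility analysis of~\cite{cormode2022sample}, which is precisely the multiplicative Chernoff lower-tail bound on the $\mathrm{Bin}(W,p_s)$ count of sampled holders that you apply, and your exponent $\exp\bigl(-(p_sW-\tau)^2/(2Wp_s)\bigr)$ matches the stated bound. Your bookkeeping is also sound: removal occurs iff fewer than $\tau$ of the $W$ holders pass the Bernoulli test, the relaxation from $X<\tau$ to $X\le\tau$ only loosens the bound, and flagging the $p_sW>\tau$ regime (so that $\delta\in(0,1)$) is the right caveat since the bound is vacuous otherwise.
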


\section{Nested-\name: a variant for high-dimensional marginal histograms}

\begin{algorithm2e}[t!]
\algsetup{linenosize=\tiny}
\small
\DontPrintSemicolon
\SetKwComment{Comment}{{\scriptsize$\triangleright$\ }}{$\quad\quad$}
\caption{Nested-\name}\label{alg:client-nested-encoding}
        \KwIn{$N$ clients, each client $i$ holding a multi-dimensional data point $\mathbf{x}_i=[x_i^{(1)},x_i^{(2)},...,x_i^{(L)}]$ with $L$ attributes, all other inputs to Algorithm~\ref{alg:mechanism1}}
        \KwOut{Clients' multi-dimensional submissions revealed to the aggregation server}
\BlankLine
\begin{algorithmic}[1]
\STATE $(\varepsilon_{\text{Re}},\delta_{\text{Re}}),(\varepsilon_{\text{Unre}},\delta_{\text{Unre}}) \leftarrow (\varepsilon,\delta)$ and $p_s,\threshold \leftarrow (\varepsilon_{\text{Re}},\delta_{\text{Re}})$ 
\FOR{$i  = 1,\ldots, N$}
\STATE $\text{SBM}=\{\}$
\FOR{$\ell = 1,\ldots, L$}
\STATE $\mathbf{x}_i^{(:\ell)}=[x_i^{(1)},\cdots,x_i^{(\ell)}]$
\STATE $r_i^{(\ell)}=\textit{Client-RandomnessServer}(\mathbf{x}_i^{(:\ell)},\text{pp},H(\cdot))$ 
\STATE $ \text{sbm}_i^{(\ell)}, \text{Key}_i^{(\ell)} \leftarrow \textit{LocalSecretSharing}(\mathbf{x}_i^{(:\ell)},r_i^{(\ell)},\Pi_{\threshold,N})$ 
\IF{$\ell == 1$}
\STATE $\widehat{\text{sbm}}_i^{(\ell)} \leftarrow \text{sbm}_i^{(\ell)}$
\ELSE
\STATE $\widehat{\text{sbm}}_i^{(\ell)} \leftarrow {\text{Enc}}({\text{Key}_i^{(\ell-1)},\text{sbm}_i^{(\ell)}})$ 
\ENDIF
\STATE $\text{SBM}.\text{append}(\widehat{\text{sbm}}_i^{(\ell)})$
\ENDFOR
\STATE $z_i=\text{Random}([0,1])$ 
\IF{$z_i \leq p_s$}
\STATE Submit SBM to the aggregation server
\ENDIF
\ENDFOR
\STATE $\text{Dummy}=\textit{DummyDataCreation}(\threshold,\text{TSDLap}(\cdot),(\varepsilon_{\text{Unre}},\delta_{\text{Unre}}))$ 
\STATE Submit Dummy to the aggregation server
\STATE $\text{ReceivedData}=(\text{SBM}\cup\text{Dummy})$ 
\FOR{$\ell = 1,\ldots, L$}
\IF{$\ell == 1$}
\STATE $\text{RecoveredData}^{(\ell)} \leftarrow \textit{Aggregation}(\text{ReceivedData}^{(\ell)})$
\ELSE
\STATE $\text{ReceivedDataWithKeys}^{(\ell)} \leftarrow {\text{Dec}}({\text{Key}^{(\ell-1)},\text{ReceivedData}^{(\ell)}})$ 
\STATE $\text{RecoveredData}^{(\ell)}=\textit{Aggregation}(\text{ReceivedDataWithKeys}^{(\ell)})$ 
\ENDIF
\ENDFOR
\STATE \textbf{Return} $\text{RecoveredData}$
\end{algorithmic}
\end{algorithm2e}

\looseness=-1\sloppy In some scenarios, clients' data consist of multiple attributes~\cite{zhang2018calm,wang2019answering,leith2021mobile}. In this case, each client $i$ holds a multi-dimensional data point that is a vector of $L \geq 2$ attributes of the form $\mathbf{x}_i=[x_i^{(1)},x_i^{(2)},...,x_i^{(L)}]$. Consider the following motivating and practical example in the case of telemetry~\cite{CCS:ErlPihKor14}. Clients generate some five-dimensional crash reports while using a Web Browser. Clients are anonymous and each dimension is a separate attribute: \textit{i)} URL visited; \textit{ii)} the underlying operating system; \textit{iii)} the state of the device's battery; \textit{iv)} session; and \textit{v)} token IDs~\cite{doi.org/10.48550/arxiv.1808.01718}. These attributes form a client's crash report. The service provider would like to learn the \emph{marginal histogram} (\ie the frequency among any joint sequence of attributes) to optimize and improve their application. Therefore, the service provider wants to maximize the utility of marginal histogram estimations and get better utility than treating client data that are made up of multiple attributes as a single data. Indeed, treating all attributes as a single data point means only those clients whose multi-dimensional data match exactly across all attributes are considered to have the same value, and this typically rarely happens in real datasets with many attributes. One straightforward solution would be to share each individual attribute or sequence of joint attributes, but this increases privacy risks.

To address these limitations, we propose Nested \name (Algorithm~\ref{alg:client-nested-encoding}) that employs a novel multi-dimensional data encoding  with a ``hierarchy-of-priority'' in which each client constructs a ciphertext by iteratively encrypting their ordered attributes such that the decrypting process halts when facing a low-frequency attribute that might risk the privacy of clients.

\looseness=-1 \myparagraph{Multi-dimensional local data encryption.} In Nested-\name, each client $i$ creates $L$ sequential prefixes $\mathbf{x}_i^{(:1)},\dots,\mathbf{x}_i^{(:L)}$ such that each prefix $\mathbf{x}_i^{(:\ell)}=[x_i^{(1)},\cdots,x_i^{(\ell)}]$ contains the sequence of attributes from the beginning to its corresponding index $\ell\in \{1,\dots,L\}$. These prefixes enable to capture joint histograms of multiple attributes instead of each individual attribute. Each client encodes prefixes $\mathbf{x}^{(:\ell)}$ such that rare prefixes (\ie a sequence of attributes which are not common across clients) cannot be decoded (\ie kept hidden from the server). Each client $i$ secret-shares each prefix $\ell$ through running Algorithm~\ref{alg:client-RandomnessServer} and Algorithm~\ref{alg:client-encoding} and construct message {$\text{sbm}_i^{(\ell)}$}. Submitting $\text{sbm}_i^{(1)}, \text{sbm}_i^{(2)}, ..., \text{sbm}_i^{(\ell)}$ separately in $\ell$ individual messages would result in two issues: \textit{i)} it increases privacy loss and reveals all tags that might leak information; and \textit{2)} it increases the overhead for both clients and servers. 
To address these issues, we chain the prefix contributions of each client (see lines 6-9 of Algorithm~\ref{alg:client-nested-encoding}) and create one single super-message $\textsf{SBM}$ such that decoding the attributes in the previous prefix would only then allow unlocking the next-longer prefix. In particular, we construct super-messages that can be iteratively opened to reveal higher levels of granularity (more attributes) when the previous prefix is shared by at least $\threshold$ clients. To do this, client $i$ encrypts each prefix $\text{sbm}_i^{(\ell)}$ with the key of its previous prefix which gets revealed once the previous prefix is decoded. Each prefix at layer \(\ell\) (except for the first) is then encrypted with the key of the previously encoded prefix. In particular,  each client computes an encrypted ciphertext $\widehat{\text{sbm}}_i^{(\ell)} \leftarrow {\text{Enc}}({\text{Key}^{(\ell-1)},\text{sbm}_i^{(\ell)}})$ with the described symmetric encryption operation for each \(\ell \in \{2,\dots,L\}\). Finally, each client $i$ creates the super-message as the tuple $\textsf{SBM}=(\widehat{\text{sbm}}_i^{(1)},\ldots,\widehat{\text{sbm}}_i^{(L)})$ and sends it to the server based on the outcome of Bernoulli test discussed in Section~\ref{sec:method}.

We assume that attributes come with a natural order (i.e., hierarchy-of-priority). However, this ordering affects the utility of Nested-\name as attributes towards the end of the clients’
submissions are less likely to be learned by the aggregation server. An interesting future direction is
to optimize the ordering of client attributes based on domain
knowledge (e.g. the distribution of data itself) to achieve
high utility.

We also note that the utility improvement of this multi-dimensional encoding might come at a cost in terms of privacy. An aggregation server with perfect background knowledge (full knowledge of all records in $D$, and full knowledge of the victim record), aiming to infer whether the victim record is in the input dataset or not, can recover some information. In particular, the aggregation server can observe tags of unrevealed prefixes $\widehat{\text{sbm}}_i^{(\ell)}$ whose immediate preceding prefixes $\widehat{\text{sbm}}_i^{(\ell-1)}$ are decoded. However, it is common to ignore this leakage in practice~\cite{desfontaines2019sok} as the above privacy leakage happens for pathological datasets with extremely skewed distribution and mostly binary values. Several ways have been proposed to address this leakage in the literature, such as relaxing the definition of differential privacy by considering practical data distributions~\cite{bassily2013coupled,duan2009privacy,kifer2012rigorous,bhaskar2011noiseless,desfontaines2019sok,li2011provably}.

\section{Experiments}
\label{sec:experiment}

We implement \name (\url{https://github.com/brave-experiments/Nebula-CCS2025}) and empirically validate: \textbf{i) Effectiveness in estimating accurate but private histograms}: in complement to the analytical privacy and utility guarantees of Section~\ref{sec:analysis}, we empirically demonstrate that the histogram estimated by \name is close to the true histogram constructed from all clients' original data while ensuring strong privacy guarantees. \textbf{ii) Efficiency in private and secure data collection}: In complement to the analytical communication costs of Section~\ref{sec:analysis}, we empirically demonstrate the ability of \name to scale to real-world use cases thanks to its low computational, bandwidth and financial costs.


We assess the performance of \name on real-world three datasets. Two of these are \emph{privacy-sensitive in nature}---the IPUMS Census dataset and the Foursquare dataset~\cite{Yang2016a}. We also assess the performance of \name on the Complete Works of Shakespeare as it is commonly used in histogram estimation literature~\cite{cormode2022sample}.

\myparagraph{IPUMS}. We use the Integrated Public Use Microdata Series of United States census data ({\url{https://usa.ipums.org/usa/}). We consider 15,537,785 data points representing persons through 5 attributes: SEX, marriage status (MARST), RACE, education (EDUC), AGE.

\myparagraph{Foursquare dataset} is derived from the mobile app ``Foursquare City Guide'', which takes advantage of a user's location to guide them to highly-rated places like restaurants and bars, while a social networking feature lets the user's friends know what places they visit. The dataset contains 33,263,633 check-in events at 3,680,126 venues (in 415 cities in 77 countries).  Each venue in the dataset (e.g. a restaurant) comes with a latitude and longitude granular enough to identify it uniquely.
We pre-process the dataset by extracting the country code and latitude/longitude pairs of each check-in event.
The result is a CSV file of 33,263,633 rows where each line contains the location information for one venue visit by one client.

\myparagraph{Shakespeare dataset.} We also consider the complete works of William Shakespeare,\footnote{Plain text edition from \url{https://cs.stanford.edu/people/karpathy/char-rnn/shakespeare_input.txt}}
as if clients were each contributing an individual word to generate a frequency
distribution. We split the text on whitespace, and apply basic normalization of
punctuation and capitalization. This results in a sequence of 832,301 values out
of a set of 29,257 unique words. Note that the frequency distribution is highly
peaked, in part because no stop words were removed.
To study the effect of domain size, we also sort words into bins based on the lower $b$ bits of their SHA256 hash, and apply the same algorithms to the bin index, creating a deterministic mapping consistent with what clients could perform before submission.


\myparagraph{Evaluation metrics.} We evaluate the effectiveness and efficiency of \name as follows. \textbf{Effectiveness}: We measure utility as a (scalar) error that quantifies the difference between the estimated and the true histograms: i) we represent each histogram as a vector, where the length of the vector corresponds to the total number of bins, and each entry represents the count in the corresponding bin; ii) we normalize each histogram by dividing each bin counts by the total counts; iii) we compute the $l_1$ norm of the difference between the estimated and original normalized histograms. \textbf{Efficiency}: We evaluate the various costs of our framework through (1) Computational costs measured as CPU running time for both the client-side encoding step and the server-side aggregation step; (2) Financial costs based on those running times and per-CPU-hour server rental prices,  and (3) Bandwidth costs by measuring the size of a submission to the aggregation and randomness servers.

\myparagraph{Parameters.} As discussed in Section~\ref{sec:analysis}, \name's privacy budget is computed as $\varepsilon=\max(\varepsilon_{\text{Unre}}, \varepsilon_{\text{Re}})$ and $\delta=\max(\delta_{\text{Unre}}, \delta_{\text{Re}})$. We set the parameters of \name---threshold \threshold, sampling rate $p_s$ and shift $t$ in TDSLap---such that we obtain a desired $(\varepsilon,\delta)$ privacy guarantees and a trade-off between utility and communication costs as follows: (1) Set $\varepsilon_{\text{Re}} = \varepsilon \leq 1$ (smaller $\varepsilon$ provides a stronger privacy guarantee) and  $0 < \alpha \leq 1$ to a desired privacy budget and a constant, respectively, and compute the sampling rate as $p_s=\alpha (1-e^{-\varepsilon})$; (2) Set $\delta_{\text{Re}}=\delta$ to be very small (less than the reciprocal of the total number of clients) and compute the threshold as $\threshold=\frac{1}{C_{\alpha}}\ln{(\frac{1}{\delta})}$ where $C_{\alpha}=\ln{(\frac{1}{\alpha})}-\frac{1}{1+\alpha}$; and (3) set $t=2+2/\varepsilon_{\text{Unre}}\log(2/\delta_{\text{Unre}}))$ such that $\varepsilon_{\text{Unre}} \leq \varepsilon_{\text{Re}}$ and $\delta_{\text{Unre}} \leq \delta_{\text{Re}}$. In particular, setting $\varepsilon=1$, $\alpha=1/6$ and $\delta=10^{-8}$ yields $p_s= 0.105$, $\threshold=20$ and $t=15$. 

\myparagraph{Implementation of cryptographic primitives.} 
We use the Adept Secret Sharing framework, implemented in v0.2.3 of the “adds” crate. For randomness stretching / OPRF, we use the Puncturable Partially Oblivious Pseudorandom Function algorithm, implemented with v0.4.1 of the “pporpf” crate. Finally, for Symmetric cryptography and hash functions, we pull from the Strobe protocol framework, implemented in v0.10.0 of the ``strobe-rs'' crate.

\subsection{Utility Comparison to Existing Works}

\begin{table}[t]
    \centering
    \caption{Comparison of the utility of \name against: i) DP approaches--including all local, shuffle and central models-- under a fixed DP guarantee of $\varepsilon=1$; and ii) the threshold-aggregation technique STAR implementing a $K$-anonymity privacy protection ($K=20$).  Utility is assessed as the error between the estimated and true histograms ($\downarrow$: the lower, the better) on Shakespeare (Shak.), IPUMS and Foursquare (Fours.) datasets. \name enforces differential privacy (unlike STAR) and achieves lower error than both local and shuffle DP methods. It brings utility closer to that of central DP, without requiring trust in a central server.}
    \begin{tabular}{|l|c|ccc|}
    \Xhline{3\arrayrulewidth} 
\multirow{2}{*}{Approaches} & \multirow{2}{*}{DP} & \multicolumn{3}{c|}{Utility ($\downarrow$)} \\
    &   & Shak. & IPUMS & Fours. \\
\Xhline{3\arrayrulewidth} 
      Secure threshold-aggregation (STAR~\cite{CCS:DSQGLH22}) &  \xmark & 0.1934 & 0.0314 & 0.6217\\
      \hline
      Central (Laplace noise~\cite{dwork2014algorithmic}) & \cmark & 0.0348 & 0.0044 & 0.0005\\
      Shuffle (Multi-message Bernoulli noise~\cite{balcer2019separating})        & \cmark & 1.2910 & 0.4344 & 1.4184\\
      Local  (General-Randomized Response~\cite{KairouzOV14})    & \cmark & 1.5921 & 1.6622 & 2.0000\\
      \name    & \cmark &  0.5772 & 0.1932 & 1.3999\\
      \Xhline{3\arrayrulewidth} 
    \end{tabular}
    \label{tab:utility-comparision}
\end{table}

We evaluate our DP data collection framework, \name, against existing DP methods, and (non-DP) threshold-aggregation systems. Specifically, we compare \name to four baselines: \textbf{Local DP based on the generalized randomized response mechanism}~\cite{KairouzOV14}: Dividing the total privacy budget $\varepsilon$ evenly across all attributes, each attribute is locally perturbed by the client using randomized response under the corresponding per-attribute budget, and the final report aggregates these noisy responses. \textbf{Shuffle DP via multi-message Bernoulli noise addition}~\cite{balcer2019separating}: Each client reports a value 1 for their true bin value, and independently reports a value 1 for each bin of the histogram with probability $p_{\text{Shuffle}}=1- \frac{50}{\varepsilon^2N}\ln{(2/\delta)}$. A shuffler collects the messages, concatenates and randomly permutes them, and forwards the shuffled message to the aggregation server. The aggregator server can then recover an unbiased estimate of the histogram. \textbf{Central DP via Laplace noise addition~\cite{dwork2014algorithmic}}: the aggregation server collects the raw client's data, computes the true histogram and adds Laplace noise to each bin). \textbf{STAR}~\cite{CCS:DSQGLH22} with $K$-anonymity protection (no DP): Each client encodes their value as $K$-out-of-$N$ secret shares and sends to the aggregation server (without Poisson sampling and dummy data that \name introduces for achieving DP guarantees). The aggregation server decrypts values submitted by at least $K$ clients.

\begin{figure}[t!]
\centering
\begin{tikzpicture}
\begin{axis}[cycle list name=color list,
                  small,
                  axis lines=left, 
                  width=6cm,
                  height=5cm,
ymajorgrids,
                  ymin=0.00,
                  ymax=1,
enlarge x limits=0.04,
                  ylabel={\small Absolute error},
                  xlabel={\small Number of histogram bins},
                  scaled ticks=false,
xtick={6,7,...,14},
                  xticklabels={$2^{6}$,$2^{7}$,$2^{8}$,$2^{9}$,$2^{10}$,$2^{11}$,$2^{12}$,$2^{13}$,$2^{14}$},
legend style={at={(1.4,0.95)}, anchor=north},
                  ]

\addplot+[green,line width=2pt, error bars/.cd,y fixed,y dir=both, y explicit]
              table[x=bits,y=nebula,y error=nebula-sd] {shakespeare-bits-wShuffle.txt};
\addplot+[red,dotted, line width=2pt, error bars/.cd,y fixed,y dir=both, y explicit]
              table[x=bits,y=ldp,y error=ldp-sd] {shakespeare-bits-wShuffle.txt}; 
\addplot+[blue,densely dashdotted,line width=2pt, error bars/.cd,y fixed,y dir=both, y explicit]
              table[x=bits,y=gdp,y error=gdp-sd] {shakespeare-bits-wShuffle.txt};
\addplot+[black,dashdotted,line width=2pt, error bars/.cd,y fixed,y dir=both, y explicit]
              table[x=bits,y=shuffle,y error=shuffle-sd] {shakespeare-bits-wShuffle.txt};              
\legend{\small \name, \small Local DP, \small Central DP, \small Shuffle DP}
\end{axis}
\end{tikzpicture}
\caption{Utility of \name compared with local, Shuffle and central differential privacy applied to the Shakespeare database as a function of histogram bins using an $\varepsilon=1$ DP privacy guarantee. The word-frequency estimate of \name is more accurate than local and shuffle DP while while removing the trust of the central DP models on the server.}\label{fig:shakespeare-bins}
\end{figure}

Table~\ref{tab:utility-comparision} shows the histogram estimation error of \name, local DP, Shuffle DP, central DP and STAR on the three datasets. Results demonstrate that \name is more effective than the alternative local DP and shuffle DP approaches in the collection of high-utility data with strong DP guarantees: the utility of \name is closer to the utility of the central model of DP in which clients must trust the server. 
The absolute error in Shakespeare is the lowest. This is because Shakespeare's domain size is smaller than those of the other two datasets, IPUMS and Foursquare. In these datasets, a value represents a combination of multiple attributes, causing the domain size to grow exponentially with the number of attributes.

We further analyze the effect of the domain size in Figure~\ref{fig:shakespeare-bins}. Words from the Shakespeare dataset are mapped by hash value into between 64 and 16384 bins. The smaller the number of bins, the lower the absolute error. As the domain size shrinks, the error trends toward that of the central DP method, consistent with our above explanation of the performance difference between datasets. Across all domain sizes, \name consistently achieves significantly lower absolute error in the estimated histogram compared to local DP. As the domain size approaches the true domain of the dataset, the utility advantage of \name over Shuffle DP becomes increasingly evident. Additionally, \name is highly communication-efficient: each client transmits information solely about its held item, while Shuffle DP produces a message for each possible value. The domain-independent communication complexity of \name is especially beneficial in large domains~\cite{cormode2022sample}.

\looseness=-1 Conversely, the absolute error in Foursquare is very high: this is because Foursquare consists of multi-attribute data, resulting in a large domain consisting of specific geographic coordinates. Next, we discuss a variant of \name that can decrease the error in multi-attribute cases such as  Foursquare and IPUMS census.

\subsection{Utility Improvements via Nested-\name}

We analyze the impact of multi-dimensional encoding on the utility of \name in estimating the marginal histogram across both multi-dimensional dataset: the IPUMS dataset and the Foursquare dataset. We compare the absolute error of \name against Nested \name when estimating the histogram for each prefix.

Figure~\ref{fig:AbsErrIPUMS} (first row) demonstrates that multi-dimensional encodings improve the ability of \name to collect high-utility multi-dimensional IPUMS data. As the number of attributes increases in a prefix, the absolute error of the histogram estimation increases (when including all attributes, we recover the results of Table~\ref{tab:utility-comparision}). This is because increasing the number of attributes in a prefix decreases the chance of having more copies of items, amplifying the costs of sampling and pruning on revealing the item at the output to the server: the chance of sampling a low-frequency item decreases and it is more likely the items will be pruned (see Lemma~\ref{th:utility}).

\looseness=-1 We now turn to the Foursquare dataset where each element consists of geographic coordinates and a country code, which are not independent attributes.
However, the chained prefix encoding can still be applied to improve utility by coarse-graining the venue locations. If each visit is split into the country code and successive digits of the coordinates, a sequence of 8 attributes is produced reporting the event location with increasing granularity. With this encoding, partial recovery of joint attributes amounts to recovering regional aggregate popularity at multiple scales. Figure~\ref{fig:AbsErrIPUMS} (second row) shows the improvement in the absolute error using this multi-dimensional nested encoding for the Foursquare dataset. To further demonstrate the effectiveness of \name and our multi-dimensional nested encoding, we compare the estimated Nested \name histogram and the true Foursquare histogram of country codes in Figure~\ref{fig:hist}. We observe that Nested \name preserves the relative frequencies across attribute values. For example, the most popular items stay popular in the estimated histogram.

\begin{figure}[t!]
\centering
\small
\begin{tikzpicture}
\begin{axis}[cycle list name=color list,
axis lines=left, 
                  width=8cm,
                  height=4cm,
ybar,
                  ymin=0.00,
                  ymax=0.20,
                  enlarge x limits=0.1,
                  ylabel={Absolute error},
                  xlabel={\small IPUMS USA Prefix (\ie a sequence of attributes)},
symbolic x coords={S, SM, SMR, SMRE, SMREA},
                  xtick=data,
                  x axis line style={draw=none},
                  legend style={at={(.25,0.95)}, anchor=north},
]

\addplot[fill=black!60!green,draw=black!60!green] coordinates {(S,0.0006) (SM,0.0019) (SMR,0.0041) (SMRE,0.0118) (SMREA,0.1932)};\label{ss-ipums}

\addplot[fill=green,draw=green] coordinates {(S,0.0032) (SM,0.0671) (SMR,0.1170) (SMRE,0.1410) (SMREA,0.1932)};\label{vanilla-ss-ipums}

\legend{\small Nested \name, \small \name}

\end{axis}
\end{tikzpicture}
\begin{tikzpicture}
\begin{axis}[cycle list name=color list,
axis lines=left, 
                  width=9.1cm,
                  height=4cm,
ybar,
                  ymin=0.00,
                  ymax=1.65,
                  legend style={at={(.25,0.95)}, anchor=north},
                  enlarge x limits=0.1,
                  ylabel={Absolute error},
                  xlabel={\small Foursquare Prefix (\ie a sequence of attributes)},
                  ytick={0.0,0.4,0.8,1.2, 1.6},
                  yticklabels={0.0,0.4,0.8,1.2,1.6},
                  symbolic x coords={$\mathbf{x}^{(1)}$, $\mathbf{x}^{(2)}$, $\mathbf{x}^{(3)}$, $\mathbf{x}^{(4)}$, $\mathbf{x}^{(5)}$, $\mathbf{x}^{(6)}$, $\mathbf{x}^{(7)}$, $\mathbf{x}^{(8)}$},
                  xtick=data,
                  x axis line style={draw=none},
]

\addplot[fill=black!60!green,draw=black!60!green] coordinates {($\mathbf{x}^{(1)}$,0.0025) ($\mathbf{x}^{(2)}$,0.0069) ({$\mathbf{x}^{(3)}$},0.0191) ({$\mathbf{x}^{(4)}$},0.2535) ({$\mathbf{x}^{(5)}$},1.0887) ({$\mathbf{x}^{(6)}$},1.3822) ({$\mathbf{x}^{(7)}$},1.3999) ({$\mathbf{x}^{(8)}$},1.3999)};\label{ss-foursquare}

\addplot[fill=green,draw=green] coordinates {($\mathbf{x}^{(1)}$,0.3904) ($\mathbf{x}^{(2)}$,0.4636) ({$\mathbf{x}^{(3)}$},0.5113) ({$\mathbf{x}^{(4)}$},0.7827) ({$\mathbf{x}^{(5)}$},1.2664) ({$\mathbf{x}^{(6)}$},1.3926) ({$\mathbf{x}^{(7)}$},1.3997) ({$\mathbf{x}^{(8)}$},1.3999)};\label{vanilla-ss-foursquare}

\legend{\small Nested \name, \small \name}

\end{axis}
\end{tikzpicture}
\caption{Improving the utility of \name in estimating the histogram on the multi-attribute IPUMS dataset and Foursquare dataset using multi-dimensional data encoding, Nested \name (Algorithm~\ref{alg:client-nested-encoding}). IPUMS contains 5 attributes--S: Sex; M: Marriage status; R: Race; E: Education; A: Age. Foursquare dataset contains
8 prefixes: $\mathbf{x}^{(1)}=[x_1]$, $\mathbf{x}^{(2)}=[x_1,x_2]$, $\cdots$, $\mathbf{x}^{(8)}=[x_1,\cdots,x_{8}]$. We compute the utility as the absolute error between the original histogram and the estimated histogram. Multi-dimensional data encoding significantly improves the absolute error of each marginal histogram (\ie histogram of each sequence of joint attributes).}
\label{fig:AbsErrIPUMS}
\end{figure}

\begin{figure*}[t]
    \centering
    \includegraphics[width=0.9\textwidth]{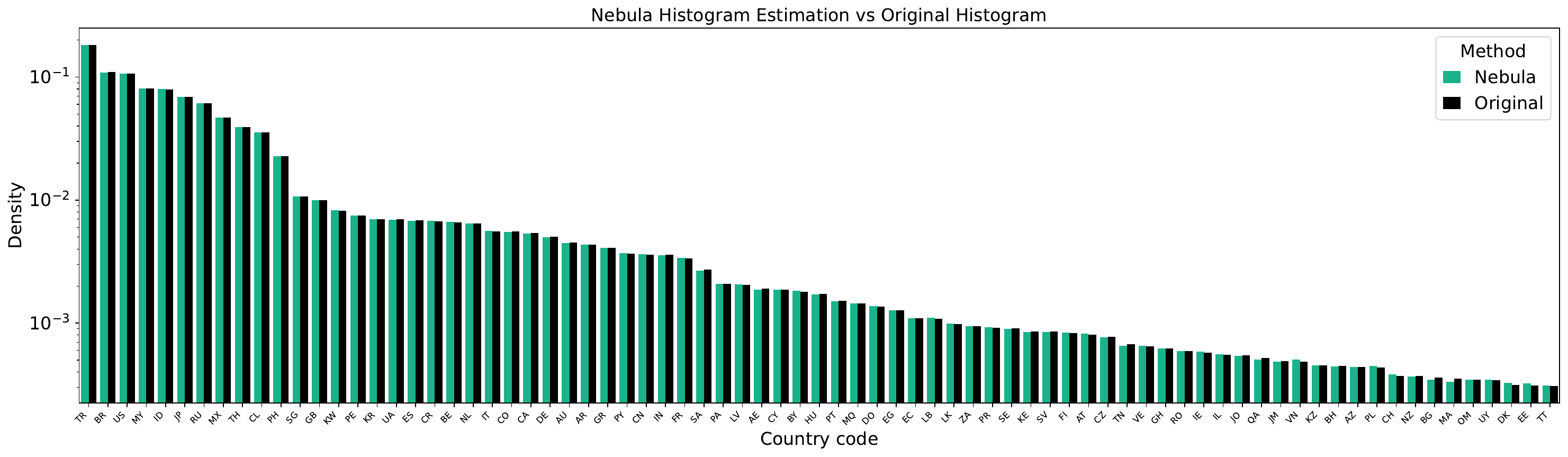}
    \caption{Original and estimated histogram obtained privately by \name using Foursquare dataset. The private histogram estimated by \name is close to the histogram of the original data. \name also preserves the relative order across values.}
    \label{fig:hist}
\end{figure*}

\subsection{\name is Efficient}
\label{sec:expEffic}

We evaluate computational, bandwidth, and financial costs.

\myparagraph{Computational costs.} We measure the CPU running time of our framework on AWS r6a machine for each client, the aggregation server and the randomness server, separately. Each client performs two sets of computations: (1) obtain randomness by interacting with the randomness server (Algorithm~\ref{alg:client-RandomnessServer}); and (2) encode attributes to be submitted to the aggregation server (Algorithm~\ref{alg:client-encoding}). As shown in Table~\ref{tab:runningtime-datasets-as}, these steps (1) and (2) take 4.96 and 0.85 milliseconds, respectively, for each submission from the Foursquare dataset with the chained-prefix encoding and 8 attributes. Running times are even lower for the IPUMS dataset as each client holds fewer attributes (five). This running time scales proportionately for the Shakespeare dataset, where each client submission consists of a single word without prefix encoding. Per-submission running time is comparable for all three datasets when the whole value is encoded as a single attribute. The running time of the randomness server (last row in Table~\ref{tab:runningtime-datasets-as}) is very low; as it only needs to perform one OPRF evaluation on each client's request using its secret key. This takes only 0.48 milliseconds for each Foursquare client submission. For simplicity, we omit the zero-knowledge proof steps of the verifiable OPRF in these benchmarks. Table~\ref{tab:runningtime-datasets-as} shows the CPU running time of our framework for the aggregation server. It takes 345 seconds for the server to process all 33,263,633 client submissions from the Foursquare dataset. Therefore, \textbf{\name introduces only a very little computational overhead for all parties---clients, the randomness server and the aggregation server}.

\begin{table}[t!]
    \centering
    \small
    \caption{Efficiency of \name in terms of running time on Foursquare and IPUMS datasets, in \emph{milliseconds per submission}. \textbf{The computational overhead of \name for clients and the randomness server is very small.}}
    \begin{tabular}{|l|l|lll|}
    \Xhline{3\arrayrulewidth} 
\multirow{2}{*}{Party} & \multirow{2}{*}{Function} & \multicolumn{3}{c|}{Dataset} \\
      &  & Foursquare  & IPUMS & Shakespeare \\
\Xhline{3\arrayrulewidth} 
      \multirow{2}{*}{Client}  & Encode & $4.96$ & $3.07$ & $0.42$  \\
      & Randomness   &  $0.85$  & $0.53$ & $0.21$ \\
 \hline
      Server & OPRF evaluation &  $0.48 $  & $0.30$ & $0.06$ \\
\Xhline{3\arrayrulewidth} 
    \end{tabular}
    \label{tab:runningtime-datasets-as}
\end{table}
\begin{table}[t!]
    \centering
    \small
    \caption{Efficiency of \name in terms of running time on three datasets, in \emph{seconds for all submissions}. \textbf{The computational overhead of \name for the aggregation server is small.}}
    \begin{tabular}{|l|l|lll|}
    \Xhline{3\arrayrulewidth} 
\multirow{2}{*}{Party} & \multirow{2}{*}{Function} & \multicolumn{3}{c|}{Dataset} \\
 & & Foursquare  & IPUMS & Shakespeare \\
\Xhline{3\arrayrulewidth} 

Server & Decode & $345$ & $101$ & $8$ \\
      \Xhline{3\arrayrulewidth} 
    \end{tabular}
    \label{tab:runningtime-datasets}
\end{table}
\begin{table}[t!]
    \centering
    \caption{Efficiency of \name in terms of bandwidth costs on Foursquare and IPUMS datasets in bytes per submission. \textbf{The bandwidth cost of \name is very small}.}
    \begin{tabular}{|l|lll|}
    \Xhline{3\arrayrulewidth} 
Interaction & Foursquare  & IPUMS & Shakespeare \\
\Xhline{3\arrayrulewidth} 
Client\&RandomnessServer & 256 & 160 & 32 \\
Client\&AggregationServer & 2465 & 1470 & 373 \\
      \Xhline{3\arrayrulewidth} 
    \end{tabular}
    \label{tab:bandwidth-datasets}
\end{table}

\myparagraph{Bandwidth costs.} Table~\ref{tab:bandwidth-datasets} shows the bandwidth costs that \name introduces for each client. The total communication costs of running \name for each client is at most 2.7 KB (not including framing and transport overhead) for the multi-attribute Foursquare encoding, combining the interaction of each client with both the randomness and the aggregation servers.
Each client submits about 300 bytes per attribute with some fixed overhead for internal framing combining all communication, with most of that traffic going to the aggregation server. See Section~\ref{sec:comm-analysis} for a detailed breakdown.
Dummy data submitted to hide below-threshold submissions is $t \frac{(\tau - 1)\tau}{2}$ (expectation) and $2t\frac{(\tau - 1)\tau}{2}$ (worst-case) where $t$ is the expectation of the truncated discrete Laplace distribution and $\tau$ is the threshold for pruning values. Given our parameter values $t=14$ and $\tau=20$, this amounts to a few thousand extra aggregation server reports which is still quite small in absolute terms, and completely negligible on the server side.
Therefore, \textbf{\name introduces very small bandwidth costs for clients}.

\begin{figure}[t!]
\centering
\small
\setlength{\tabcolsep}{1pt}
\begin{tabular}{ll}
\begin{tikzpicture}
\begin{axis}[cycle list name=color list,
                  axis y line*=left, 
                  width=4.5cm,
                  height=4.5cm,
                  title= \small Bandwidth Overhead (MB),
                  enlarge y limits=0.01,
                  enlarge x limits=0.02,
                  xlabel={\small Number of attributes},
                  ylabel={\small Client \& Randomness Server},
                  tick label style={/pgf/number format/fixed},
                  scaled ticks=false,
                  ytick={0.00,0.01,0.02,0.03},
                  yticklabels={0.00,0.01,0.02,0.03}
                  ]

\addplot+[line width=0.8pt,densely dashdotted]
              table[x=x,y=y1] {scale_bandwidth_cr.txt};
\end{axis}
\begin{axis}[cycle list name=color list,
                  axis lines*=right, 
                  axis x line=none,
                  width=4.5cm,
                  height=4.5cm,
                  enlarge y limits=0.01,
                  enlarge x limits=0.02,
                  ylabel={\small Client \& Aggregation Server},
                  ytick={0.00,0.10,0.20,0.30},
                  yticklabels={0.00,0.10,0.20,0.30}
                  ]

\addplot+[line width=0.8pt,densely dashdotted]
              table[x=x,y=y2] {scale_bandwidth_cr.txt};

\end{axis}
\end{tikzpicture}
\begin{tikzpicture}
\begin{axis}[cycle list name=color list,
axis lines=left, 
                  width=4.5cm,
                  height=4.5cm,
title= Running time (Seconds),
                  enlarge y limits=0.01,
                  enlarge x limits=0.02,
                  xlabel={Number of attributes},
                  ylabel={Client and both Servers},
]

\addplot+[line width=0.8pt,densely dashdotted]
              table[x=x,y=y3] {scale_bandwidth_cr.txt};

\end{axis}
\end{tikzpicture}
\end{tabular}
\caption{Scalability of \name in terms of the computational and bandwidth overhead introduced for clients. \textbf{\name scales to a large number of attributes with negligible costs.}}
\label{fig:ScaleAttr}
\end{figure}

\begin{figure}[t!]
\centering
\small
   \begin{tikzpicture}
    \pgfplotsset{
y axis style/.style={
            yticklabel style=#1,
            ylabel style=#1,
            y axis line style=#1,
            ytick style=#1
       }
    }
    
    \begin{axis}[
      axis y line*=left,
width=7cm,
       height=4.5cm,
       xtick={1,5,10,15,20,25,30,33},
       xticklabels={1,5,10,15,20,25,30,33},
      xlabel=Number of clients (M),
      ylabel=Running time (seconds),
      tick label style={/pgf/number format/fixed},
    ]
    \addplot[smooth,mark=x] 
      coordinates{
        (1,6)
        (5,36) 
        (10,82)
        (15,134)
        (20,188) 
        (25,245) 
        (30,307)
        (33,344)
    };
    \end{axis}
    
    \begin{axis}[
      axis y line*=right,
      axis x line=none,
      width=7cm,
      height=4.5cm,
ylabel=Financial costs (USD),
tick label style={/pgf/number format/fixed},
    ]
    \addplot[smooth,mark=*] 
      coordinates{
        (1,0.001461)
        (5,0.009177) 
        (10,0.020700)
        (15,0.033811)
        (20,0.047355) 
        (25,0.061801)
        (30,0.077361)
        (33,0.086850)
    };
    \end{axis}
    
    \end{tikzpicture}
    \caption{Scalability of \name in terms of the computational and financial costs for the aggregation server. \textbf{\name scales to a large number of clients with negligible costs.}}
    \vspace{-0.4cm}
    \label{fig:sc-as}
\end{figure}

\looseness=-1 \myparagraph{Financial costs.} We compute the financial costs of running \name for the aggregation and randomness servers.
We benchmarked \name on an AWS r6a.4xlarge instance, currently priced at US\$0.9072 per hour. As such, the amortized cost of aggregating submissions from the IPUMS dataset is 0.03 USD, the Foursquare dataset 0.09 USD, and the Complete Works of Shakespeare only 0.002 USD.

\subsection{\name is Scalable}
\looseness=-1 We further analyze the scalability of \name by considering various numbers of attributes using the same hardware described in Section~\ref{sec:expEffic}.
Timings for the IPUMS and Foursquare datasets report the more expensive multi-attribute encoding scheme (Algorithm~\ref{alg:client-nested-encoding}), while the Shakespeare dataset does not use multiple attributes and represents the whole-value reporting scheme in general.
Figure~\ref{fig:ScaleAttr} shows the effect of the number of attributes on the bandwidth and computational costs of clients needed to interact with the randomness server and prepare the submission to the aggregation server. Interaction with the randomness server scales linearly with the number of attributes, since the OPRF must be evaluated separately for each prefix.
Bandwidth costs of each client interacting with the aggregation server also scale linearly with the number of attributes.
Finally, we observe that the time of \name run by each client scales linearly with the number of attributes. This is because run time is dominated by the key share and encryption steps which in the multi-attribute scheme need to be done once per attribute to allow partial recovery of multivariate joint attributes.

Finally, we analyzed the scalability of \name by considering various numbers of clients. Figure~\ref{fig:sc-as} shows the running time and financial costs of the aggregation server as a function of number of clients. \textbf{\name can collect multi-dimensional data from a very large number of clients with small costs.}

\section{Related work}
\label{sec:related}

\myparagraph{Threshold-aggregation data collection systems}~\cite{CCS:DSQGLH22,boneh2021lightweight,JMLR:BNST20,bassily2015local,ACMTA:BNS19,CCS:QYYKXR16,PMLR:ZKMSL20,prochlo,Chaum81,CCS:Neff01}
allow a central party to learn submitted values if and only if a predefined number of clients send the exact same value. Poplar~\cite{boneh2021lightweight} uses distributed point functions to create a secret sharing pair of a vector in which only a single element with an index corresponding to the client's data is non-zero.  Clients then send their secret shares to \emph{two non-colluding aggregation servers}, which compute the sum of submitted shares and publish the sum values. STAR~\cite{CCS:DSQGLH22} uses a different $\threshold$-out-of-$N$ secret-sharing scheme to avoid the need for two aggregation servers. Clients encode their values as secret shares, and send their shares to a \emph{single aggregation server}, which is to decrypt values that have been encoded by at least $\threshold$ submitted shares. STAR provides high efficiency and more desirable trust requirements than Poplar, though at the cost of leaking the histogram of unrecovered values (\ie{} values submitted by less than $\threshold$ clients). POPSTAR~\cite{li2024popstar} tries to hide the distribution of unrevealed values in STAR, though in a manner that requires significantly more computation (7x computation, and an estimated 2-3x increase in the required time, compared to the dataset as STAR).

Existing threshold aggregation systems have significant limitations. First, all threshold-aggregation systems, like all deterministic $k$-anonymity systems, lack robust, provable privacy guarantees. Furthermore, current threshold-aggregation systems (including Poplar and STAR) are only well-suited to handle single-dimension values. Trying to use these systems to handle multi-dimensional records entails significant utility loss. One option is to flatten multi-dimensional records into a single dimension (\eg{} concatenation, summation, etc.), and run the system on that ``flattened'' value. This approach significantly harms utility, since submitted records would need to match across all original dimensions to count towards each record's recovery threshold, decreasing the amount of information the server is able to recover. The second option is to have clients submit each dimension independently, treating a record with three attributes
as three independent records. This also harms utility, though in a different way: the aggregation server is unable to learn any relationships or correlations between data attributes.

\looseness=-1 Our proposed system, \name, works better than these systems in terms of both utility and privacy by \textit{i)} allowing clients to submit data with multiple attributes such that the utility of marginal histogram estimations is maximized; and \textit{ii)} satisfying strong differential privacy guarantees (through sampling followed by pruning, and dummy data) without trusting servers.

\myparagraph{Differentially private data collection systems.} Differential Privacy (DP)~\cite{dwork2014algorithmic} uses  
statistical indistinguishably to ensure privacy. DP is typically implemented in one of two forms: first, a \emph{central model} where the aggregation server receives unmodified user data, who then
applies privacy protections to the data before sharing it, and second, a \emph{local model}, where users apply privacy protections to their own data before sharing it with the aggregation server. These approaches achieve different privacy-utility trade offs.

\looseness=-1 Central DP systems provide high utility, but suffer from often prohibitive
trust assumptions (\ie clients must trust the aggregation server and send their raw data to the server). This is a practical problem, as many servers do not provide the privacy protections they promise (intentionally or otherwise)~\cite{tang2017privacy}. Central  DP systems also carry the risk of a single point of failure for data breaches~\cite{roy2020crypt,venturini2019api}.

Local DP systems, on the other hand, provide strong privacy
guarantees, typically by perturbing data before revealing it to untrusted parties~\cite{JMLR:BNST20, PMLR:ZKMSL20}.
This greatly improves the privacy and security properties of the system, but at the cost of reducing the utility of the aggregated data. The shuffle model of DP~\cite{balcer2019separating} improves the utility by allowing to perturb data with less noise, while trusting an intermediate shuffler to apply a uniform random permutation to all data before the aggregation server views them.

More recent DP proposals attempt to achieve better utility 
through using multi-party computation or homomorphic encryption to actually reduce the level of trust required in central  DP systems~\cite{bohler2020secure,boehler2022secure,pettai2015combining,cheu2019distributed,corrigan2017prio,bonawitz2017practical,roy2020crypt,bell2022distributed}. For example, Bell et.al.,~\cite{bell2022distributed} present a protocol for computing DP histograms carried out between two non-colluding aggregation servers. In each round of server-to-server communication, each server injects carefully crafted dummy data into its message, ensuring that the additional leakage revealed to other server beyond the output remains differentially private--a DP anonymized histogram of indices to one server and a DP anonymized histogram of values to other server. But these proposals suffer from their own drawbacks, including \textit{i)} requiring networks of non-colluding servers, a majority of whom are assumed to behave honestly~\cite{corrigan2017prio}, \textit{ii)} imposing high
computation and communication overheads costs~\cite{cheu2019distributed}, \textit{iii)} only being suited for simple aggregation functions~\cite{bonawitz2017practical,boehler2022secure,bohler2020secure,pettai2015combining}, and \textit{iv)} requiring interactive communication between clients and servers. Requiring interactions between the servers (more than 1 aggregation server) makes it more difficult to guarantee the non-collusion requirement and it may also have practical costs (e.g., it may be harder to recruit collaborative partners).
Finally, most DP systems are also limited to one-dimensional data, limiting their utility or applicability to many scenarios.
 
\looseness=-1 Our system, \name, obtains better utility than both local DP randomized and shuffling\footnote{Bharadwaj and Cormode~\cite{cormode2022sample} analytically and empirically demonstrated the superior performance of the sample-and-threshold compared to the shuffle model of DP.} as: \textit{i)} the utility error of \name is independent of the number of attributes as opposed to local DP randomizers in which the noise grows significantly as the number of attributes increases; \textit{ii)} in contrast to existing DP  systems, \name does not add explicit noise, thus introducing no spurious attribute values. In addition to this, our system avoids prohibitive trust assumptions required in the central model deployment of DP while being efficient because of not requiring expensive multi-party computations and homomorphic encryption operations. 

\myparagraph{Trusted hardware.}  Finally, a third-general approach to private data collection uses
trusted hardware to enforce privacy guarantees. For example, Prochlo~\cite{prochlo} uses trusted
hardware to collect unmodified data from clients. This trusted hardware is able to collect, shuffle, and modify user data \emph{before} privacy protections are applied to the data. Once these trusted servers have received sufficient data, it is modified and passed onto untrusted hardware, which does the primary data summarizing and aggregation. Trusted hardware carries a wide range of downsides and limitations though, including relatively high cost, resource limitations (in some cases), and (in many cases) merely re-shuffled trust requirements. Approaches like mix-nets~\cite{Chaum81} and verifiable shuffling~\cite{CCS:Neff01} can provide security and privacy guarantees similar to (but without requiring) trusted hardware, though at the cost of increased interactivity.

\section{Discussion and Future work}
\label{sec:discussion}
In this paper, we proposed \name that can be used to privately estimate and publish histogram of data generated by clients.  \name introduces necessary randomness for the privacy protection of clients through sampling, thresholding, and dummy data injection, and removes the trust assumption on the server through a customized secret-sharing protocol. Incorporating synergies and optimizations on both fronts of sample-and-threshold differential privacy and secure threshold aggregation enables \name to provide high utility without imposing prohibitive trust requirements, relying on computationally expensive cryptographic operations, or requiring bandwidth-intensive multi-round communications between clients and servers. We analytically and empirically demonstrated that \name is effective, efficient and scalable. 

We conclude by discussing some limitations of our approach and by outlining promising directions for future work. As discussed in Section~\ref{sec:problem_formulation}, our threat model, like those commonly found in the literature, assumes honest-but-curious clients. An interesting future work is to extend \name to defend against malicious clients who either deviate from the protocol or collude with one or both servers to compromise the privacy of honest clients. Another interesting direction involves designing efficient protocols for the verifiable selection of a single client (or a subset of clients) to generate and submit dummy data, as outlined in Section~\ref{sec:dummydataInject}.

\myparagraph{Acknowledgements.} We thanks Olive Franzese, Sofía Celi, and Alex Davidson for useful comments and feedback.

\bibliographystyle{IEEEtran}
\bibliography{refs.bib}
\appendix

\section{Security Proof}
\label{app:secproof}

\myparagraph{Formal Security Model}. Aggregation Server $S_A$ may be corrupted by a \emph{malicious} adversary with arbitrary behavior. It is assumed not to collude with the randomness server or any clients; Randomness Server $S_R$ may be corrupted by a \emph{malicious} adversary with arbitrary behavior. It is assumed not to collude with $S_A$ or any clients; Clients $\{C_i\}^N_{i=1}$ may be corrupted by a \emph{semi-honest} adversary. Corrupted clients follow the protocol, but seek to learn information about other parties. They are assumed not to collude with other parties.

\name effectively employs a protocol for secure computation of private threshold aggregation reporting (STAR~\cite{CCS:DSQGLH22}) as a subroutine, with additional elements for guaranteeing differential privacy over its outputs. To reason about the security of \name, we will begin by recalling the elements that are the same between the two protocols.

In particular, Algorithm~\ref{alg:client-RandomnessServer} plus lines 1-2 of Algorithm~\ref{alg:client-encoding} are equivalent to the Randomness Phase of~\cite{CCS:DSQGLH22} with a VOPRF as instantiated in~\cite{albrecht2021round}, assuming the hash function $H$ is a secure random oracle. The remainder of Algorithm~\ref{alg:client-encoding} is equivalent to the Message Phase of~\cite{CCS:DSQGLH22} for the subset of clients which are selected by the local Bernoulli tests in lines 6-8 of Algorithm~\ref{alg:mechanism1}, and for the clients that are not selected it is equivalent to aborting immediately before the Message Phase (this is tolerated trivially by the security proofs in~\cite{CCS:DSQGLH22}). Algorithm~\ref{alg:dummy-modified} similarly replicates the Randomness Phase and Message Phase of~\cite{CCS:DSQGLH22} in lines 7 and 8 respectively, followed by submission to $S_A$ in Algorithm~\ref{alg:mechanism1} line 10. Finally, Algorithm~\ref{alg:aggregation} is equivalent to the Aggregation Phase of~\cite{CCS:DSQGLH22}. 

These elements can be considered an implementation of the protocol from~\cite{CCS:DSQGLH22}, realizing the functionality $\mathcal{F}_\text{STAR}$ which we recapitulate in Figure~\ref{functionality:star}. We make one addition, formalizing the fact that their protocol enables clients to participate in the Randomness Phase and then abort before submitting data in the Message Phase. We use these observations to prove the security of \name based on the Universal Composition paradigm~\cite{canetti2000uc}, a standard cryptographic proof technique. Before we proceed to the proof, we will recall elements of STAR and \name that are \emph{dissimilar}.

The Bernoulli tests in lines 6-8 of Algorithm~\ref{alg:mechanism1} are not present in~\cite{CCS:DSQGLH22} and neither are lines 9-11 of Algorithm~\ref{alg:mechanism1}, which call Algorithm~\ref{alg:dummy} as a subroutine. As previously mentioned, the former is trivially tolerated by~\cite{CCS:DSQGLH22}. The latter is equivalent to increasing the number of client inputs received by the STAR functionality, which are chosen so that they reveal no information about the input data of any client. Thus our protocol for \name can be rewritten as: i) $S_R$ samples a VOPRF keypair (msk, mpk); ii) Clients $\{C_i\}^N_{i=1}$ sample $b_i \gets \text{Bern}(p_s)$; iii) A randomly chosen client $C^*$ samples $\{\alpha_i\}^{\threshold - 1}_{i=1}$ where each $\alpha_i$ is sampled independently from $\alpha_i \leftarrow \text{TSDLap}(\lambda=2/\varepsilon_{\text{Unre}},t=2+2/\varepsilon_{\text{Unre}}\log(2/\delta_{\text{Unre}}))$; 
iv) For each $\alpha_i$, $C^*$ constructs $\alpha_i$ groups of dummy clients of size $i$ with input $(\omega_{i,j}, \text{aux}_{i,j})$ for all $j \in \alpha_i$. Each $\omega_{i,j}$ is a distinct measurement outside the set of client measurements $\{x_i\}^N_{i=1}$. Call the set of all dummy clients $\mathcal{D}$;
v) $S_A, S_R,$ and client pool $\{C_i\}^N_{i=1} \cup \mathcal{D}$ call $\mathcal{F}_\text{STAR}$ using their respective inputs from above. Assume additional leakage of the cardinality of each group of inputs $n_\ell \equiv |S_\ell|$ where $S_\ell$ is the set of submissions from $\mathcal{C} \cup \mathcal{D}$ which share the same unique measurement $x_\ell$, as in the protocol from~\cite{CCS:DSQGLH22}.

We use this formulation of our protocol to demonstrate that \name realizes the ideal functionality $\mathcal{F}_\text{Nebula}$ shown in Figure~\ref{functionality:nebula}. This augmented functionality ensures that the output histogram has the DP guarantees proven in Section~\ref{sec:analysis}. Note that we also output to $S_A$ the cardinality of each group of inputs provided by $\mathcal{C}\cup\mathcal{D}$ which shares a unique measurement $x_\ell$. This is also included in STAR, but excluded from the ideal functionality and formalized as a leakage function $\mathcal{L}$. We include it as an output of $\mathcal{F}_\text{Nebula}$ since the method is designed specifically to make this leakage uninformative using differential privacy.

\begin{figure}
\begin{tcolorbox}[title=Ideal Functionality $\mathcal{F}_\text{STAR}$]
\textbf{Participants:} 
\begin{itemize}[leftmargin=1.5em]
\item Aggregation server $S_A$ 
\item Randomness server $S_R$  
\item Clients $\{C_i\}_{i=1}^N$
\end{itemize}
\textbf{Public parameters:}
\begin{itemize}[leftmargin=1.5em]
    \item Threshold $\threshold$
\end{itemize}
\textbf{Inputs:}
\begin{itemize}[leftmargin=1.5em]
        \item $S_R$: provides VOPRF keypair $(\text{msk}, \text{mpk})$.
        \item Client $C_i \in \{C_i\}^N_{i=1}$ provides input $(x_i, \text{aux}_i)$ and a bit $b_i$ to indicate protocol abort immediately after the Randomness Phase.
        \item $S_A$ has no inputs, provides $\bot$
\end{itemize}
\textbf{Functionality:}
\begin{enumerate}
    \item For each unique $x_\ell$ construct:
    \[
    \mathcal{E}_\ell = \left\{(x_\ell, \{\text{aux}_j\}_{j \in J}, \threshold_\ell) : \left(J \subseteq [N]\right) \wedge \left(x_j = x_\ell\right) \wedge (b_j = 1) \right\}
    \]
    where $\threshold_\ell = \left|\{\text{x}_j\}_{j \in J}\right|$ is the number of sampled client measurements in $\mathcal{E}_\ell$.
    \item Let $\mathcal{Y}$ be an empty map.
    \item For each $\mathcal{E}_\ell$ where $\threshold_\ell \geq \threshold$, set $\mathcal{Y}[x_\ell] = \mathcal{E}_\ell$.
\end{enumerate}
\textbf{Outputs:}
\begin{itemize}[leftmargin=1.5em] 
        \item output $\mathcal{Y}$ to $S_A$
        \item output $\{\mathcal{F}_\Gamma(\text{msk}, x_i)\}^N_{i=1}$ to $S_R$, where $\mathcal{F}_\Gamma$ is an ideal functionality for VOPRF protocol $\Gamma$. 
        \item output $\bot$ to each $\{C_i\}^N_{i=1}$
\end{itemize}
\end{tcolorbox}
\caption{Ideal functionality for STAR~\cite{CCS:DSQGLH22}.}\label{functionality:star}
\vspace{-0.4cm}
\end{figure}

Now ready to prove the security and correctness of our protocol.

\begin{algorithm2e}[t!]
\algsetup{linenosize=\tiny}
\small
\DontPrintSemicolon
\SetKwComment{Comment}{{\scriptsize$\triangleright$\ }}{$\quad\quad$}
\caption{\textit{Modified DummyDataCreation}: Create groups of dummy data}\label{alg:dummy-modified}
        \KwIn{Threshold $\threshold$, Truncated Shifted Discrete Laplace distribution $\text{TSDLap}(\cdot)$, DP guarantees $(\varepsilon_{\text{Unre}},\delta_{\text{Unre}})$}
        \KwOut{A set of dummy data}
\BlankLine
\begin{algorithmic}[1] 
    \STATE $\text{Dummy}=\{\}$ 
    \STATE Select a client for creating dummy data
    \FOR{$i = 1,\ldots,\threshold-1$}
    \STATE $\alpha \leftarrow \text{TSDLap}(\lambda=2/\varepsilon_{\text{Unre}},t=2+2/\varepsilon_{\text{Unre}}\log(2/\delta_{\text{Unre}}))$ 
    \FOR{$j \in \alpha$}
    \STATE $x_j \gets \text{DummyObservation()}$
    \STATE $r_j = ClientRandomnessServer(x_j, pp, H(\cdot))$
    \STATE $\text{sbm}_j$, \_ $\gets LocalSecretSharing(x_j, r_j, \Pi_{\threshold, N})$
    \STATE $\text{Dummy}.\text{append}(\{(\text{sbm}_j)^i\})$
    \ENDFOR
    \ENDFOR
    \STATE \textbf{Return} $\text{Dummy}$
\end{algorithmic}
\end{algorithm2e}
\begin{theorem}
    (Correctness.) The \name protocol is correct with all but negligible probability.
\end{theorem}
\begin{proof}
Clients perform Bernoulli tests to decide whether to submit their real data (Algorithm~\ref{alg:mechanism1}), and craft and submit dummy data (Algorithm~\ref{alg:dummy-modified}). Therefore, the set of data received by the aggregation server in the protocol is equivalent to the inputs specified in $\mathcal{F}_\text{Nebula}$. The remaining parts of the protocol reduce to performing the STAR protocol on this modified set of data (Poisson subsampled real data joint with dummy data), resulting in the aggregation server learning only a DP histogram of revealed submissions and a DP histogram of unrevealed submissions as proven in our Theorem~\ref{theorem:DP}. Thus the rest of the proof reduces to the correctness proof for STAR, which demonstrates correctness except with negligible probability (\cite{CCS:DSQGLH22} Theorem 2). Therefore, \name correctly produces the outputs shown in $\mathcal{F}_\text{Nebula}$ (i.e., revealing only the expected DP histogram to the aggregation server as shown in Figure~\ref{functionality:nebula}).
\end{proof}

\begin{theorem}
    (Malicious aggregation server.) The \name protocol is secure against any $\mathcal{A}$ that corrupts $S_A$, assuming a secure protocol which realizes $\mathcal{F}_\text{STAR}$.
\end{theorem}
\begin{proof}
    Given any $\mathcal{A}$, we define a simulator $\mathcal{S}$ which interacts with $\mathcal{A}$ and the ideal functionality $\mathcal{F}_\text{Nebula}$ as follows: $\mathcal{S}$ submits $\bot$ to $\mathcal{F}_\text{Nebula}$; $\mathcal{S}$ receives from $\mathcal{F}_\text{Nebula}$ the histogram $\mathcal{Y}$, and cardinality of each group of inputs $n_\ell$ for every unique measurement $x_\ell$; $\mathcal{S}$ simulates $\mathcal{F}_\text{STAR}$: when $\mathcal{A}$ submits $\bot$ to $\mathcal{F}_\text{STAR}$, $\mathcal{S}$ sends $\mathcal{Y}$ back; $\mathcal{S}$ simulates the leakage $\mathcal{L}$ by submitting each $n_\ell$ to $\mathcal{A}$.

$\mathcal{S}$ simulates the view of $\mathcal{A}$. Uncorrupted clients correctly perform Poisson sampling and dummy data injection locally in the real-world execution, resulting in a set of outputs to $\mathcal{F}_\text{STAR}$ which is sampled from the same distribution as the outputs of $\mathcal{F}_\text{Nebula}$. Since our security model specifies that clients do not collude with $S_A$, we can assume all clients are uncorrupted when $\mathcal{A}$ corrupts $S_A$. So the simulated and real-world views are statistically indistinguishable. 
\end{proof}
Operations conducted between the clients and the randomness server $S_R$ are the same as in STAR~\cite{CCS:DSQGLH22}. 
\begin{theorem}
    (Malicious randomness server.) The \name protocol is secure against any adversary $\mathcal{A}$ that corrupts $S_R$, assuming a secure protocol which realizes $\mathcal{F}_\text{STAR}$
\end{theorem}
\begin{proof}
    Given any $\mathcal{A}$, we define a simulator $\mathcal{S}$ which interacts with $\mathcal{A}$ and the ideal functionality $\mathcal{F}_\text{Nebula}$ as follows:  simulating $\mathcal{F}_\text{STAR}$, $\mathcal{S}$ receives inputs for (msk, mpk) from $\mathcal{A}$; $\mathcal{S}$ submits (msk, mpk) to $\mathcal{F}_\text{Nebula}$ and receives $\{\mathcal{F}_\text{VOPRF}(\text{msk}, x_i)\}^N_{i=1}$; $\mathcal{S}$ uses (msk, mpk) to simulate $\mathcal{F}_\text{VOPRF}(\text{msk}, x_j)$ for all dummy clients in $\mathcal{D}$. Call this set of outputs $W$; $\mathcal{S}$ sends $\{\mathcal{F}_\text{VOPRF}(\text{msk}, x_i)\}^N_{i=1} \cup W$ to $\mathcal{A}$ as the output of $\mathcal{F}_\text{STAR}$.
    This simulates the view of $\mathcal{A}$.
\end{proof}

\begin{theorem}
    (Semi-honest corrupted clients.) The \name protocol is secure against any adversary $\mathcal{A}$ that corrupts a subset of clients $T \subset \{C_i\}^N_{i=1}$, assuming a secure protocol which realizes $\mathcal{F}_\text{STAR}$.
\end{theorem}
\begin{proof}
    The simulator takes inputs submitted to $\mathcal{F}_\text{STAR}$ and submits them to $\mathcal{F}_\text{Nebula}$. It then returns $\bot$ to each client in $T$. This trivially simulates the view of $\mathcal{A}$.
    
    To show that the joint distribution over inputs and outputs is statistically indistinguishable for real and ideal world execution, we recall that clients are corrupted only by semi-honest adversaries in our security model. Thus we can assume that all corrupted clients in $T$ follow the protocol faithfully. This means that the clients perform Poisson sampling and dummy data injection correctly even if they are corrupted in the real-world execution. Thus, the joint distributions in the real and ideal world executions are the same.
\end{proof}

\end{document}